\newenvironment{lemmaclone}[1]{\noindent
{Lemma~\ref{#1}} (restated).\em}{\par}
  \providecommand\BibTeX{{%
    \normalfont B\kern-0.5em{\scshape i\kern-0.25em b}\kern-0.8em\TeX}}}
\begin{document}

\title{Economically Viable Randomness}

\author{David Yakira}
\affiliation{%
  \institution{Orbs, Technion}}
\email{david@orbs.com}

\author{Avi Asayag}
\affiliation{%
  \institution{Technion}}
\email{avi.asayag@campus.technion.ac.il}

\author{Ido Grayevsky}
\affiliation{%
  \institution{University of Oxford}}
\email{ido.grayevsky@maths.ox.ac.uk}

\author{Idit Keidar}
\affiliation{%
  \institution{Technion}}
\email{idish@ee.technion.ac.il}



\begin{abstract}
    We study the problem of providing blockchain applications with \emph{economically viable randomness} (EVR), namely, randomness that has significant economic consequences. Applications of EVR include blockchain-based lotteries and gambling. An EVR source guarantees (i) \emph{secrecy}, assuring that the random bits are kept secret until some predefined condition indicates that they are safe to reveal (e.g., the lottery's ticket sale closes), and (ii) \emph{robustness}, guaranteeing that the random bits are published once the condition holds.
We formalize the EVR problem and solve it on top of an Ethereum-like blockchain abstraction, which supports smart contracts and a transferable native coin. Randomness is generated via a distributed open commit-reveal scheme by game-theoretic agents who strive to maximize their coin holdings. Note that in an economic setting, such agents might profit from breaking secrecy or robustness, and may engage in side agreements (via smart contracts) to this end. Our solution creates an incentive structure that counters such attacks. We prove that following the protocol gives rise to a stable state, called Coalition-Proof Nash Equilibrium, from which no coalition comprised of a subset of the players can agree to deviate. In this stable state, robustness and secrecy are satisfied. Finally, we implement our EVR source over Ethereum.

\end{abstract}

\keywords{distributed randomness, game theory, smart contracts, blockchain.}

\maketitle

\section{Introduction}
\label{sec:intro}
\subsection{Motivation and Goal}
In today's digital gambling industry, bit strings presumed to be pseudo-random are generated behind closed doors by centralized services such as online casinos and lotteries. These random bits have significant economic implications in allocating valuable prizes to winners. Yet, these services are not subject to public audit and their integrity thus must be blindly trusted by their users (the gamblers).

Decentralized smart contract platforms like Ethereum~\cite{ethereum,EthereumButerin2013whitepaper} are creating a paradigm shift in the way digital services are built. They facilitate \emph{decentralized applications}, \emph{dApps} for short, which are deployed as smart contracts~\cite{szabo1997formalizing} and are guaranteed to follow well-specified code. Due to their significant economic implications and susceptibility to manipulation, gambling games could be a perfect match for this new paradigm. Indeed, ever since it was launched, Ethereum has experienced a proliferation of gaming and gambling dApps that require randomness~\cite{dappradar}. However, the randomness generation processes that these dApps adopt today lack clear guarantees and there is no way to evaluate their resilience to selfish economic agents who strive to maximize their coin holdings.

This paper addresses the need for trustworthy randomness. We provide for the first time a randomness source for economic game-theoretic settings, which dApps can trigger on-demand. We say that such a source provides \emph{Economically Viable Randomness (EVR)}. 

Yet, realizing an EVR source involves a number of intrinsic challenges. First, if the random bits have high stakes (e.g., in allocating a large jackpot) there is a strong economic pressure to tamper with them. It is challenging to counter this pressure without requiring a high collateral.
Second, perhaps paradoxically, the same decentralized and open nature that blockchain platforms offer also renders them inherently deterministic and therefore incapable of generating random numbers. Any attempt to enhance dApps with randomness therefore has to incorporate an off-chain component and use the blockchain to verify the authenticity of the random value it provides. To be trustworthy, such an off-chain component must be decentralized and combine randomness from many independent sources. This creates the challenge of scalability, which is exacerbated in blockchain-based solutions due to their fairly low capacity and high costs. In order to achieve scalability, one needs to steer most of the heavy lifting off-chain.


In what follows, we describe our approach to providing EVR. 



\subsection{Contributions}
\paragraph{Blockchain abstraction}
In order to conceptualize our solution independently of any particular blockchain technology, we define in \Cref{sec:model} an abstraction, denoted $\Z$, capturing the pertinent aspects of a blockchain offering smart contracts and a native coin. The coin sets the ground for a game-theoretic model, where agents strive to maximize their coin holdings in $\Z$. Smart contracts allow us to implement trusted services. 
We shall use them to realize an escrow service that handles our game-theoretic agents' coins in a way that incentivizes them to behave in accordance with the system's desired outcome. On the other hand, smart contracts can be used by our agents to facilitate trustless agreements among them in attempts to break our scheme. The blockchain abstraction also concretizes the notion of publishing information -- anything written on $\Z$ is irrevocably publicly available.

\paragraph{An EVR source}
In \Cref{sec:EVR_primitive} we formalize the problem that we set to solve in this work. We begin by defining an ideal randomness source that captures the essence of generating trustworthy randomness via a commit-reveal scheme.
The application for which the randomness is provided specifies a condition (e.g., a time in the future) indicating when it would like the randomness to be revealed. Then, an ideal source satisfies (i) \emph{secrecy} -- the random value does not leak before the condition holds; and (ii) \emph{robustness} -- when the condition is met, the random value is indeed revealed.

An ideal source can be used in a simple lottery dApp as follows: Gamblers buy lottery tickets by sending coins to the lottery's contract and indicating a 256-bit string. During the ticket sale, the randomness must remain completely obscure so that no one can gain an unfair advantage in buying a winning ticket. The condition to reveal the randomness materializes slightly after the ticket sale closes. Then, the lottery's account obtains a random 256-bit string with which it can infer the winning tickets.         

While an ideal source satisfies secrecy and robustness under all circumstances, an EVR source emulates an ideal one as long as the application abides with explicit economic bounds on the value of the randomness, defined by the source. For example, the application must maintain its total payout (to gamblers) under some bound specified by the EVR source. In the lottery example, this can be achieved by limiting the total amount raised from ticket sales.


\paragraph{Our solution}
In \Cref{sec:EVR_realization} we describe our realization of an EVR source. Our EVR source is distributed, with open participation: any blockchain account owner who wishes to partake in producing the randomness needs to \emph{register} within a designated escrow service, realized as a smart contract on $\Z$, while depositing a small collateral of $1$ coin. The registered accounts' owners are the \emph{players} that realize the EVR source. We use an escrow-mediated distributed key generation (DKG) protocol~\cite{rational_threshold_cryptosystems,EthDKG_austrians} in order to achieve robustness for high-stake lotteries despite using small deposits, while doing most of the computations off-chain with lightweight on-chain verification. 
Our solution guarantees secrecy through a novel \emph{informing} mechanism that allows anyone who knows the random value during the period when it should remain obscure to report it on the escrow contract for a substantial compensation, funded by the players' deposits. 

\paragraph{Game-theoretic analysis}
In \Cref{sec:analysis} we analyze our solution in a game-theoretic setting. Our main result is compelling -- we show that players are incentivized to follow the default strategy (namely, follow the protocol honestly) ensuring both secrecy and robustness. Particularly, in the default strategy, players who can inform do so, and our analysis shows that under well-specified conditions, informing is profitable and thus effective in deterring collusion in attempt to break secrecy.

More formally, we show that following the default strategy gives rise to a \emph{Coalition-Proof Nash equilibrium (CPNE)}~\cite{CPNE}, a strong equilibrium concept in game-theory that withstands any self-enforcing deviation by a coalition (subset of players) of any size. Put differently, if a strategy is a CPNE, there is no coalition that can deviate from it to the benefit of all of its members while no sub-coalition can further deviate and gain more, making the original deviation unstable in the first place. Importantly, our result holds even if players may engage in side agreements that are enforced via smart contracts. 

\paragraph{Practical considerations}
While most of the paper discusses single-shot EVR, where a single random value is generated and published, in \Cref{sec:multi_shot_EVRS} we describe the multi-shot version, where a single commitment produces multiple random values revealed in succession as the circumstances ripen to reveal them. This generalization is needed by real-life applications such as card and die games. Then, in \Cref{sec:implementation} we present a proof-of-concept multi-shot EVR implementation, where the escrow service is realized as a smart contract on Ethereum. We address real-world issues that arise due to Ethereum-specific limitations. Our solution scales to hundreds of participating accounts, within Ethereum's block gas limit.

In \Cref{sec:related} we compare our approach to distributed randomness solutions in the literature. Our work is unique in its underlying economic model that assumes that all  players are selfish. For this model, we manage to design a general service that requires constant deposits (for a jackpot that scales linearly with the number of players) and that can scale in the number of players that contribute to the randomness generation process. Finally, \Cref{sec:conclusion} concludes the paper.


\section{Model}
\label{sec:model}
In this work we design a service based on today's 
blockchain technologies that embody smart contracts (e.g., Ethereum).
In order to formalize our service in a general way, we define in \Cref{smart_contract_abstraction} an abstraction, $\Z$, capturing the pertinent features of the underlying blockchain. Our abstraction is based on Ethereum, but intentionally abstracts away Ethereum-specific technicalities like mining and generation of coins (ether), transaction fees and gas, the mempool, the P2P network, etc. We revisit these in \Cref{sec:implementation}, where we discuss our implementation over Ethereum.
In \Cref{sec:service}, we define the notion of a \emph{service} in $\Z$.

\subsection{Smart Contract Platform} \label{smart_contract_abstraction}
$\Z$ is a smart contract platform abstraction that is accessed by a collection of \emph{users}. It consists of four components:
\begin{enumerate}
    \item A global clock.
    \item \emph{Accts} -- a key-value map from account identifiers to account tuples of the form $(S, C, s_0)$, where: (1) $S$ is a (potentially infinite) set of states represented as a collection of state variables, one of which is the account's balance $b\in \mathbb{N}_0$. (2) $C$ is the code defining the account's logic. The code is organized in deterministic functions that manipulate the account's state (i.e., define state transitions).
    Functions are triggered either by \emph{transactions} that users generate, or by other functions that had been previously triggered (cross-account function triggering is possible). Functions can access the global clock. (3) $s_0\in S$ is the account's initial state, which includes $b_0$, the initial balance.
    \item \emph{Log} -- an append-only list of user-issued transactions. Every transaction begins by invoking a function on some account and that function may invoke nested calls to other accounts' functions. 
    \item \emph{State Machine} -- a deterministic machine that processes any sequence of transactions according to the appropriate account codes.
\end{enumerate}

In $\Z$'s \emph{initial state}, all accounts in Accts are in their initial states.
At any point in time, $\Z$'s \emph{current state} is
composed of
the states of all accounts after the State Machine processes  the sequence of transactions that are in  $\Z$'s Log at that time, starting from $\Z$'s initial state. The State Machine maintains a \emph{conservation law} of Accts' balances -- the sum of balances in all accounts is invariant. This way, balances represent \emph{coins}.

Users \emph{write} to $\Z$ by issuing transactions and appending them to the Log. They can also \emph{read} $\Z$'s current state. Writing and reading happen instantaneously. (We implicitly assume that transactions execute a bounded number of steps. Ethereum ensures this in practice by defining a block gas limit.) 

Computations that are invoked by transactions and are processed by $\Z$'s State Machine are said to occur \emph{on-chain}. Conversely, users can execute calculations \emph{off-chain}, i.e., on private probabilistic computationally-bounded machines.

A \emph{verifiable condition} in $\Z$ is a predicate evaluated against $\Z$'s current state. A verifiable condition that evaluates to \true{} and remains \true{} in all possible future states of $\Z$ is said to have \emph{matured}. Such conditions may be time-dependant or depend on specific transactions having been appended to $\Z$'s Log.

We distinguish between two types of accounts in $\Z$. Using Ethereum terminology, an \emph{externally-owned account} (EOA) is an account whose state consists only of a balance.
An EOA's identifier is a public key, $pk$, such that in order to spend the account's coins, the account's code validates a specific signature against $pk$. Thus, an EOA is ``owned'' by the user who has access to the secret key that corresponds to its identifier.

A \emph{smart contract} is an account that no user owns. Thus, a smart contract's code indicates the terms that allow spending its coins. In their initial states, smart contracts have zero balance, and subsequently they can receive coins from other accounts so each coin in a smart contract can be traced to its EOA of origin. 

A specific type of smart contract that we are interested in is the \emph{escrow account}. Such an account holds coins on behalf of other accounts (EOAs and smart contracts), and permits them to withdraw these coins according to some predetermined conditions, encoded in the escrow's code. For example, a chess-betting escrow account takes deposits from two EOAs (say $1$ coin each), oversees a game of chess between them, where each EOA, in turn, makes a move by submitting a transaction (the rules of chess are encoded in the escrow's code), and finally when one of the EOAs wins, it pays the winner the $2$ coins it holds.

For simplicity, $\Z$ abstracts away the process of generating new EOAs and deploying new smart contracts. Thus, we assume that all pertinent EOAs and smart contracts exist in $\Z$'s initial state.

Every transaction is associated with an issuing EOA (that belongs to user who issued the transaction). (In Ethereum, this is the EOA that pays the transaction's gas.) In case multiple transactions are appended to $\Z$'s Log at the same time, they are ordered by the EOA identifiers that issue them. (This simplification deliberately abstracts away miners' freedom to order transactions as they please, and masks issues like front-running and chain reorgs, which we revisit in \Cref{sec:implementation}.)

\subsection{A Service as a Smart Contract} \label{sec:service}
A \emph{service} is an escrow account in $\Z$ that performs some task for a third-party, represented by another smart contract on $\Z$. For example, in this work, we build a service that provides randomness to some gambling dApp. A service begins in a \emph{registration} phase, during which users who wish to partake in the service register by depositing $1$ coin into the service's account. A verifiable condition dictates when registration closes. 


In the context of this work we consider services with permission-less registration -- any user who wishes to do so can register. Moreover, a single user may register multiple times (i.e., deposit multiple coins in the service's account). When a coin is deposited by a smart contract, we attribute it to the owner of the EOA from which it originated. We refer to the users who deposit coins into a service as \emph{players}. By the end of the registration phase, the set of players is determined and fixed. Denote the registered players as $[N] = \{ 1,\dots, N \}$ and the number of coins that player $i\in [N]$ deposited as $a_i\in \mathbb{N}$. Denote $n \triangleq \sum_{i=1}^N a_i$, then the service's balance at the end of registration is $n$ coins. 

We consider a game-theoretic setting, where players follow whatever strategy maximizes their 
final coin balances in $\Z$. 
We implicitly assume that the service (or the entity that utilizes the service) offers players some potential profit, so as to incentivize them to register. Indeed, paying dividends to service providers is common in Proof-of-Stake protocols (e.g., Cosmos~\cite{CosmosWhitePaper}, Tezos~\cite{TezosWhitepaper}, Orbs~\cite{helix}) among others (e.g., Augur~\cite{AugurWhitepaper}, Truebit~\cite{TrueBit}, Livepeer~\cite{LivepeerWhitepaper}). For the sake of this work, we simply assume that enough players register.  

We further assume that a player $i$ has $e_i \ge 0$ \emph{external coins}, namely, coins that $i$ owns independently of the ones she deposited in the service's account. Additionally, players have access to private communication channels among themselves. 

Since players do not trust each other, the only agreements they can engage in are self-enforcing ones, which are, loosely speaking, agreements in which all sides are better off following the agreement. The external coins 
facilitate such trustless agreements. As an illustration, assume player $i$ earns $5$ coins more by taking action $A$ than by taking action $B$. She can deposit $8$ external coins in a smart contract that would give her her coins back only if she took action $B$. In this way, she distorts her original payoff function, and the other players would now trust her to take action $B$. We refer to such smart contracts as \emph{side contracts}. Note that players may also register to a service with side contracts (namely they first transfer coins from EOAs they own to a side contract and then have the contract register by further depositing the coins into the service's account). 

A service is considered sufficiently decentralized if no single registered player is too ``rich'' therein. The \emph{decentralization assumption} captures this notion quantitatively: 
\begin{equation}
\forall i \in [N], e_i + a_i \le \nicefrac{n}{3}.
\label{eqn:high-participation}
\end{equation}
We note that the decentralization assumption is analogous (to some extent) to the requirement that no single miner obtains too much of the hash power in Nakamoto consensus~\cite{Bitcoin}.


\section{Problem Definition: EVR}
\label{sec:EVR_primitive}
Consider an application whose users are economically affected by the outcome of a random process, for example, a lottery. Given its economic implications, the random process is a likely target for manipulation. 
Our goal in this section is to identify and formalize the properties that render a randomness source safe to use in such circumstances. 

As a starting point, we assume that the application is implemented as a smart contract in $\Z$. 
Thus, its deterministic logic is guaranteed to execute as specified. By the same token, as $\Z$'s State Machine is deterministic, randomness generation cannot happen purely on-chain. Whenever the application needs a random value, it triggers the randomness source, which must have an off-chain component where the sampling actually happens. Then, the random value is published on $\Z$ for the application to use.

In many games of chance there is a stage where gamblers make choices not knowing what the random value is going to be, and then the random value is revealed and certain gamblers make a profit while others lose. To adjust this to $\Z$'s terms, we let the application set a verifiable condition in $\Z$, $\mathit{cnd}$, that matures at some point. For the application to work as intended, the randomness must remain completely obscure until $\mathit{cnd}$ matures, and then it needs to be revealed in the clear.

In \Cref{ssec:ideal} we define an ideal on-chain randomness source that can be used by applications in this manner. While an ideal source might be hard to realize, we define in \Cref{ssec:EVR} an economically viable randomness source, which defines restrictions on the economic value of the randomness. An application that adheres to these restrictions can use an EVR source instead of an ideal one.


\subsection{Ideal Source}
\label{ssec:ideal}
A \emph{single-shot randomness source} consists of a pair of 
protocols,   \ascii{commit}
and \ascii{reveal}, and a smart contract  $\E$ in $\Z$, as follows:
\begin{itemize}
    \item 
    A (successful) run of \ascii{commit} produces a commitment, $X$, to a random sample $x$ and publishes $X$ on $\E$. Note that a \ascii{commit} run must have an off-chain component. 
    
    \item 
    Following a \ascii{commit} run, a (successful) run of \ascii{reveal} produces the value committed to, $x$, and publishes it on $\E$. The \ascii{commit} and \ascii{reveal} protocols are tied together through a well-known Boolean verification function $\ascii{ver}(x,X)$. For every $X$ and for every $x_1 \ne x_2$ either $\ascii{ver}(x_1,X)=\false{}$ or $\ascii{ver}(x_2,X)=\false{}$.
    
    \item $\E$ 
    exposes the following API:
    \begin{itemize}
        \item A function $\ascii{comTrigger}(\mathit{cnd})$ called by the application that triggers a \ascii{commit} run, where $\mathit{cnd}$ is a verifiable condition in $\Z$, determined by the application, that eventually matures. Once $\mathit{cnd}$ matures, a \ascii{reveal} run begins.
    
        \item Two variables, $\mathit{verCom}$ and $\mathit{verRev}$, that indicate whether the \ascii{commit} and \ascii{reveal} runs (resp.) have succeeded. The variables are initiated to $\perp$ and later turn \true{} or \false{}. $\mathit{verRev}$ is updated via a straightforward check with the \ascii{ver} function, whereas $\mathit{verCom}$ is protocol-specific.
    
        \item Two timeout constants, $t_{\text{com}}$ and $t_{\text{rev}}$, that dictate the maximum time that the \ascii{commit} and \ascii{reveal} runs (resp.) can take. In case one of the runs does not complete in a timely manner, $\E$ sets the appropriate variable to \false{}.
    \end{itemize}
\end{itemize}

The Boolean function \ascii{ver} together with $X$ binds the source to a specific random value. So, once $X$ is published on $\E$, $x$ is determined and no other value would be accepted by $\E$ as the random string.

Given the structure of a randomness source, there are two opportunities for users to ``game'' the application. First, obtaining information about the secret before $\mathit{cnd}$ matures might give a user an advantage relative to what the application had intended. We refer to profits made in this manner as \emph{stealing}.
Second, in case the secret is not published in a timely manner, the application will normally have some \emph{fallback} distribution determining how its coins are distributed, e.g., a refund to the gamblers.
This opens up an opportunity for users to profit from preventing the secret from being published, by causing either $\mathit{verCom}$ or $\mathit{verRev}$ to be set to \false{}.

An \emph{ideal single-shot randomness source}  eliminates both of these risks, and is thus safe to use by an application. Formally, it satisfies the following core properties:
\begin{description}
    \item[Non-triviality] If $\E.\ascii{comTrigger}(\mathit{cnd})$ is called, then $\E.\s\mathit{verCom}$\break turns \true{}.
    
    \item[Hiding secrecy] If $\E.\s\mathit{verCom}$ turns \true{}, then as long as $\mathit{cnd}$ does not mature, no user obtains \emph{any} information about $x$. 

    \item[Robustness] If $\E.\s\mathit{verCom}$ turns \true{}, then after $\mathit{cnd}$ matures\break $\E.\s\mathit{verRev}$ also turns \true{}.
    
\end{description}
Hiding secrecy implies that $X$ is a \emph{hiding} cryptographic commitment. This means that no one can infer any of $x$'s bits with probability greater than $\nicefrac{1}{2}$, implying that $x$ is indeed a random bit string.

\begin{remark}
Generally, a randomness source produces a sequence of random values, one after the other. This is useful for applications that proceed in multiple rounds and need a fresh random value for every round (e.g., card and die games). To keep the presentation concise, during most of this work, we discuss a single-shot version of the EVR source. Later, in \Cref{sec:multi_shot_EVRS}, we extend it to multi-shot EVR, in which a single commitment corresponds to a sequence of random values. 
\end{remark}

An ideal source is safe to use because it enables the application to run as intended. However, 
it is hard to realize in an economic environment where users may gain from gaming the system.

\subsection{EVR Source}\label{ssec:EVR}
We are now ready to define a \emph{single-shot EVR source} that can be used in lieu of an ideal one in an economic environment. An EVR source is a randomness source, but it satisfies the core properties only provided that the application satisfies the economic restrictions laid out in \Cref{def:evrs-usage} below. Additionally, an EVR source might satisfy \emph{secrecy} rather than hiding secrecy: 
\begin{description}
    \item[Secrecy] If $\E.\s\mathit{verCom}$ turns \true{}, then as long as $\mathit{cnd}$ does not mature, no user obtains $x$.    
\end{description}
Secrecy implies that it is computationally infeasible to learn $x$ from $X$ (and from participating in the \ascii{commit} run), but it is weaker than hiding secrecy in the sense that sophisticated users might infer some information about $x$. An EVR source that satisfies hiding secrecy is called a \emph{hiding EVR source}. 

To specify the economic restrictions that an EVR source establishes for the application, we quantify the profits that can be gained by causing the EVR source to deviate from the ideal functionality. Formally, \emph{illicit} profit is the quantity a user gains when an EVR source is used on top of the legitimate profit that the user would have gained if an ideal source were used instead. 
The EVR smart contract, $\E$, evaluates the bound on the illicit profit that the EVR source can sustain and publishes it in a variable we denote by $P$. As long as the application respects this bound and the following economic restrictions, an EVR source provides ideal-like randomness.

\begin{definition}[Single-shot EVR correct usage]
\label{def:evrs-usage}
An application correctly uses a (hiding) EVR source if it satisfies the following conditions:
\begin{enumerate}[label=CU\arabic*.,ref=CU\arabic*]
    \item If non-triviality breaks, no user gains illicit profit. \label{def:evrs-usage:commit}
    
    \item \emph{Fallback profit bound.} If robustness breaks, the total illicit profit gained by all users is less than $P$ coins. \label{def:evrs-usage:reveal}
    
    \item \emph{Stealing bound.} 
    If (hiding) secrecy breaks, the total illicit profit gained by all users is less than $P$ coins. \label{def:evrs-usage:before_cnd}
    
    \item If (hiding) secrecy and robustness are preserved, no user gains illicit profit. \label{def:evrs-usage:after_cnd}
\end{enumerate}
\end{definition}
These restrictions are reflected in the utility functions of the game we define in \Cref{sec:game-def} below.

We next exemplify the interplay between the application and the EVR source to illustrate that the correct usage restrictions are feasible. When an application, for instance a lottery, wishes to use an EVR source, it establishes an adequate condition $\mathit{cnd}$ (e.g., a time after ticket sales close) and calls $\E.\ascii{comTrigger}(\mathit{cnd})$. Once the pair $(P,X)$ is published on $\E$, the application has the randomness committed to and can run its logic, e.g., sell lottery tickets to gamblers. Then, when $\mathit{cnd}$ matures, the application needs $x$ (that corresponds to $X$) in order to complete its logic (in our example, determine the winning lottery tickets) and make payouts to winners.

If $X$ is published and then $x$ is not published in a timely manner, then the application uses its fallback rule to determine who gets its coins. 
In contrast, if $X$ is not published, then \ref{def:evrs-usage:commit} dictates that no illicit profit can be made. This implies that the application cannot make any payouts (and particularly no refunds) in this case. Note that in order to be trusted by users, a dApp will want to refund gamblers in case of failure. Thus, to comply with \ref{def:evrs-usage:commit}, the application should start collecting coins from gamblers only after a successful \ascii{commit} run. 


A straightforward way for an application to comply with \ref{def:evrs-usage:reveal} and \ref{def:evrs-usage:before_cnd} is by limiting its total payouts to $P$ coins. That is, $P$ bounds the jackpot that can be played for, which is typically the amount of coins collected from ticket sales.

Note that when an EVR source satisfies secrecy rather than hiding secrecy, to comply with~\ref{def:evrs-usage:after_cnd} the application must make sure that a gambler with partial information about $x$ (before $\mathit{cnd}$ matures) does not gain an advantage relative to gamblers without such information. To ensure this, $x$ must be used by the application carefully. In our example, if say half of $x$'s bits can be learned by a sophisticated player and the lottery's decision rule is based on $\ascii{XOR}(x,\mathit{ticket})$, then that player has a significant advantage over other players. However, if the decision rule uses $\ascii{hash}(x||\mathit{ticket})$ with a cryptographic hash function, then having partial information about $x$ does not help.

Finally, the application might have to worry about other pitfalls that depend on the specific EVR realization used. For instance, if $x$ is shared among a set of players who can combine data they each privately hold to reconstruct the secret, the application might be vulnerable to collusion via a multi-party computation (MPC)~\cite{Yao-2PC,Yao-MPC,GMW-MPC,Ben-or_MPC1,Chaum-MPC,Rabin-MPC,Beaver-MPC,Canetti-MPC1,Canetti-MPC2}. That is, players might cooperate through an MPC protocol that ensures the privacy of their inputs, computes $x$, and uses it to infer a winning ticket, which is the only part of the computation that is outputted (all steps of the computation are kept obscure). In this way, none of the players actually learns $x$, but they still manage to steal application coins.
To circumvent this issue, the application must choose a decision rule that is ``MPC resistant'', namely that is slow to compute via MPC. As of today, MPCs are not practical; for instance, a state-of-the-art SHA-256 7-party computation takes about 20 seconds (over a fast LAN) \cite{Efficient-MPC}, which is 7 orders of magnitude slower than performing the same computation insecurely \cite{hash_gates,sha256_intel_xeon_computer}. 


\section{An EVR Source as a Service in \texorpdfstring{$\Z$}{B}}
\label{sec:EVR_realization}
We now realize an EVR source. 
In \Cref{sec:escrow_DKG} we overview Escrow-DKG, which is a building block in our solution. Then, in \Cref{sec:realization} we present our protocol. In \Cref{sec:design-rationale} we explain the rationale behind our protocol's design and parameter choices.

\subsection{Background -- Escrow-DKG}
\label{sec:escrow_DKG}

Distributed Key Generation (DKG) protocols~\cite{PedersenDKG,Gennaro_DKG_99,Gennaro03_revisiting} for discrete-log based threshold schemes allow a set of $n$ servers to jointly generate a pair of public and secret keys, $(x,X=g^x)$, in such a way that $X$ is output in the clear while $x$ is shared by the $n$ servers via Shamir secret sharing~\cite{SecretSharing}. Unless an adversary compromises more than a specified threshold $t$ out of the $n$ servers, $x$ remains secret and its shares can be subsequently used by the servers to jointly compute $x$. The secret shares can alternatively be used jointly to perform other cryptographic tasks, e.g., decryption and signatures; (this will become useful for the multi-shot EVR in \Cref{sec:multi_shot_EVRS}). We refer to the key generation protocol as \ascii{DKGCommit} and to the subsequent protocol that combines shares in order to reconstruct $x$ as \ascii{DKGReveal}.


Escrow-DKG~\cite{rational_threshold_cryptosystems} is a DKG protocol variant in the Joint-Feldman family~\cite{Gennaro_DKG_99,Gennaro03_revisiting,PedersenDKG}. It differs from the traditional protocols in its underlying model. 
Whereas traditional DKG protocols assume an adversary that corrupts up to $t$ servers, Escrow-DKG assumes an economic model where all players are rational. Additionally, as its name hints, it assumes a trusted escrow service that substitutes and enhances the broadcast channel usually assumed in these protocols. While runs of traditional DKG protocols always succeed, Escrow-DKG might fail. 

An Escrow-DKG run begins with a permission-less registration phase (as defined for a service in \Cref{sec:service}), where players deposit $1$ coin per secret share they stand to obtain. After registration, the set of players is fixed. The players then engage in a run of \ascii{DKGCommit}. The run fails if the escrow detects that one (or more) of the players has deviated from the protocol. In such an event, the escrow declares the failure, returns deposits tied to honest shares, and confiscates deposits tied to misbehaving shares. When there is nothing to gain from failing a \ascii{DKGCommit} run, it is every player's dominant strategy to comply with the protocol and so the run ends successfully, with every player obtaining a valid share per coin they deposited. These shares correspond to $x$, and $X=g^x$ is published as part of the run.

During a subsequent run of \ascii{DKGReveal}, players broadcast their shares to each other. Any player that collects $t+1$ shares can then reconstruct $x$ and publish it to the escrow. The escrow verifies that the published value is indeed the secret key that corresponds to $X$ using the Boolean function $\ascii{ver}(x,X)$, which verifies that $X=g^x$.

Eth-DKG~\cite{Eth-DKG_Github} implements such an escrow as a smart contract on Ethereum. 

\subsection{Our EVR Source Realization}
\label{sec:realization}
Our EVR source uses \ascii{DKGcommit} and \ascii{DKGreveal}. The smart contract that we implement, denoted $\G$, is an extension of the escrow functionality in Escrow-DKG. Like Escrow-DKG, $\G$ is a service in $\Z$ (see \Cref{sec:service}).
In \Cref{algo:G_alg} we give the pseudo-code for $\G$. It proceeds through the following phases, as depicted in \Cref{fig:G-stages}:
\begin{algorithm} 
	\captionof{algorithm}{\label{algo:G_alg} The service $\G$}
	\begin{algorithmic}[1]
        
    	\footnotesize
    
        \Statex \textbf{Constants}: $t_{\text{com}}, t_{\text{rev}}$, $g$
    
        \Statex\textbf{Globals}: $X$, $x$ ,$P$ ,$\ell$ ,$t$ ,$\mathit{cnd}$ ,$\mathit{commitStart}$
        
        \Statex \textbf{Init}: 
        $\mathit{accounts} \leftarrow $ empty list;\hskip1em $n\leftarrow 0$;\hskip1em $\mathit{verCom}$ ,$\mathit{verRev}\leftarrow \perp$;\hskip1em 
        \Statex \hskip2.1em $\mathit{phase} \leftarrow \text{`registration'}$
        \Statex

        \Statex \underline{function $\ascii{ver}(x,X)$}
        \Indent
            \State \textbf{return} $X = g^x$
        \EndIndent
        
        
        \Statex \underline{function $\ascii{register}(\mathit{acc},\$1)$} 
        \Indent
            \State \textbf{if} $\mathit{phase} \neq \text{`registration'}$ \textbf{then} \textbf{exit}
            \State $\mathit{accounts}.\mathit{append}(\mathit{acc})$
            \State $n \leftarrow n+1$;\hskip1em $\G\s.\mathit{balance} = \G\s.\mathit{balance} + \$1$ 
        \EndIndent
        
        \Statex \underline{function $\ascii{comTrigger}(\mathit{cndApp})$} \Comment{invoked only by the application}
        \Indent
            \State \textbf{if} $\mathit{phase} \neq \text{`registration'}$ \textbf{then} \textbf{exit}
            \State $\mathit{phase} \leftarrow \text{`commit'}$
            \State $\mathit{commitStart} \leftarrow \mathit{now}$
            \State $t \leftarrow \frac{2n}{3}$;\hskip1em $\ell \leftarrow n$;\hskip1em $P \leftarrow n - t$ \label{alg:G:params}
            \State $\mathit{cnd} \leftarrow \mathit{cndApp}$ \label{alg:G:cnd}
        \EndIndent
        
        
        \Statex \underline{function $\ascii{interactCommit}(\mathit{data})$} 
        \Indent
            \State \textbf{if} $\mathit{phase} \neq \text{`commit'}$ \textbf{then} \textbf{exit}
            \State \textbf{if} $\mathit{now} - \mathit{commitStart} < t_{\text{com}}$
            \Indent
                \State save $\mathit{data}$ in $\mathit{localState}$
            \EndIndent
            \State \textbf{if} $\mathit{now} - \mathit{commitStart} \ge t_{\text{com}}$
            \Indent
                \State $\mathit{verCom} \leftarrow \ascii{noMisbehavior}(\mathit{localState})$  \label{alg:G:no_misbehavior}
                \State \textbf{if} $\mathit{verCom}$
                \Indent
                    \State compute $X$ from $\mathit{localState}$
                    \State $\mathit{phase} \leftarrow \text{`pending'}$
                \EndIndent
                \State \textbf{else}
                \Indent
                    \State $\mathit{phase} \leftarrow \text{`abort'}$
                \EndIndent
            \EndIndent
        \EndIndent
        
        
        \Statex  \underline{function $\ascii{inform}(\mathit{xInform},\mathit{acc})$} 
        \Indent
            \State \textbf{if} $\mathit{phase} = \text{`commit'}$ \textbf{then} $\ascii{interactCommit}()$
            \State \textbf{if} $\mathit{phase} \neq \text{`pending'}$ \textbf{then} \textbf{exit}
            
            \State $\mathit{verRev} \leftarrow \ascii{ver}(\mathit{xInform},X)$
            \State \textbf{if} (not $\mathit{cnd}$) \textbf{and} $\mathit{verRev}$
            \Indent
                \State $x \leftarrow \mathit{xInform}$
                \State pay $\mathit{acc}$ $\$\ell$;\hskip1em $\G\s.\mathit{balance} = \G\s.\mathit{balance} - \$\ell$  \label{alg:G:informing_reward}
                \State $\mathit{phase} \leftarrow \text{`abort'}$ \label{alg:G:abort_informing}
            \EndIndent
        \EndIndent
        
        
        \Statex  \underline{function $\ascii{reveal}(\mathit{xReveal})$} \label{alg:G:getrand}
        \Indent
            \State \textbf{if} $\mathit{phase} = \text{`commit'}$ \textbf{then} $\ascii{interactCommit}()$
            \State \textbf{if} $\mathit{phase} \neq \text{`pending'}$ \textbf{then} \textbf{exit}
            
            \State \textbf{if} $\mathit{cnd}$ \textbf{and} $\mathit{now}-\mathit{matureTime}(\mathit{cnd}) < t_{\text{rev}}$ \Comment{reveal phase (implicit)} \label{alg:G:enter_reveal}
            \Indent
                \State $\mathit{verRev} \leftarrow \ascii{ver}(\mathit{xReveal},X)$ \label{alg:G:correct_secret}
                \State \textbf{if} $\mathit{verRev}$ 
                \Indent
                    \State $x \leftarrow \mathit{xReveal}$
                    \State pay each $\mathit{acc} \in \mathit{accounts}$ $\$1$;\hskip1em $\G\s.\mathit{balance} = \G\s.\mathit{balance} - \$n$ \label{alg:G:return_deposits}
                    \State $\mathit{phase} \leftarrow \text{`final'}$
                \EndIndent
            \EndIndent
            \State \textbf{if} $\mathit{now}-\mathit{matureTime}(\mathit{cnd}) \geq t_{\text{rev}}$ \label{alg:G:if_reveal_timeout}
            \Indent
                \State $\mathit{phase} \leftarrow \text{`abort'}$ \label{alg:G:reveal_abort}
            \EndIndent
        \EndIndent
 	\end{algorithmic}
\end{algorithm}

\subsubsection*{The registration phase}
During registration, users submit registration transactions that invoke the $\G\s.\ascii{register}(\mathit{acc}, \$1)$ function of Escrow-DKG (we use the $\$$ sign to denote coins). 

\subsubsection*{The commit phase}
$\G$ enters the commit phase when the application calls $\G\s.\ascii{comTrigger}(\mathit{cndApp})$. This function sets the parameters for \ascii{DKGcommit}: $n$ is the number of accounts that have registered, and accordingly, the total number of shares to be generated by a successful run (and the total number of coins deposited in $\G$ during the registration phase); $t=\frac{2n}{3}$ is the threshold parameter. Recall that we assume that $N$ players register, such that player $i$ deposits $a_i$ coins (so $\sum_{i=1}^N a_i = n$), and thus, ends up with $a_i$ private shares (in a successful run).  $\G\s.\ascii{comTrigger}(\mathit{cndApp})$ sets three additional parameters (that are not related to \ascii{DKGcommit}; \cref{alg:G:params,alg:G:cnd}): $P=n-t=\nicefrac{n}{3}$ is the bound on the illicit profit that the EVR can sustain; $\ell=n$ is the informing reward, explained below; and $\mathit{cnd}=\mathit{cndApp}$ is the application's condition to trigger the \ascii{reveal} run. 

\ascii{DKGcommit} can now run among the $N$ players (and their $n$ accounts). Players invoke the \ascii{interactCommit} function to write information to $\G$'s state during the \ascii{DKGcommit} run. This information is vital for $\G$ to detect misbehavior (e.g., an account that does not supply the data it is expected to). The \ascii{DKGcommit} run has $t_{\text{com}}$ seconds to complete, by the end of which $\G$ verifies that no misbehavior was detected by any of the registered accounts (as described in the Escrow-DKG paper~\cite{rational_threshold_cryptosystems}); \cref{alg:G:no_misbehavior}. If $\G$ does not detect any misbehavior, $\G\s.\s\mathit{verCom}$ is updated to \true{}. Otherwise, $\G\s.\s\mathit{verCom}$ is updated to \false{} and $\G$ aborts.   

In a successful run, any $t+1$ shares (or more) can reconstruct $x$, while any $t$ shares (or less) reveal nothing about $x$. The run also computes $X=g^x$ and publishes it on $\G$. 

By \ref{def:evrs-usage:commit}, nothing can be gained from failing \ascii{DKGcommit}, and thus, when all players are rational, \ascii{DKGcommit} does not fail. Hence, our solution is non-trivial. In case it does fail, $\G$ returns all deposits that are tied to honest shares, as noted in \Cref{sec:escrow_DKG} above (this part is not shown in the pseudo-code).

\subsubsection*{The pending phase}
When the commit phase completes successfully, $\G$ enters the pending phase. This phase is the crux of the protocol, when only the informing function $\G\s.\ascii{inform}(x,\mathit{acc})$ can be called. Informing is key to ensuring secrecy, namely that $x$ remains secret so long as $\mathit{cnd}$ does not mature.

The incentive to break secrecy is evident -- a group of players who collude, secretly pass their shares among themselves, and manage to reconstruct $x$ before $\mathit{cnd}$ matures may make a substantial illicit profit by stealing up to $P$ coins (by \ref{def:evrs-usage:before_cnd}) from the application (for instance, by knowingly buying a winning lottery ticket). The informing mechanism nullifies this incentive by allowing any user who knows $x$ before $\mathit{cnd}$ matures to publish it on $\G$'s account for a high reward of $\ell$ coins (\cref{alg:G:informing_reward}). The informant who ``betrays'' the collusion is rewarded for her actions. The reward comes from the players' deposits, which are all confiscated in this case. If informing occurs, it implies that $t+1$ (or more) accounts have reconstructed $x$ while it was forbidden to do so; informing can thus be seen as a collective punishment of the players.

If multiple informants attempt to inform, we assume for simplicity that all of their informing transactions are submitted to $\Z$ simultaneously. $\Z$'s Log then orders the transactions according to the deterministic order described in \Cref{smart_contract_abstraction} and only the first transaction takes effect. If informing happens, $\G$ advances to the abort phase; \cref{alg:G:abort_informing}. Otherwise, $\G$ enters the reveal phase when $\mathit{cnd}$ matures; \cref{alg:G:enter_reveal}.

\subsubsection*{The reveal phase}
During the reveal phase, the players engage in a run of \ascii{DKGreveal} to reconstruct $x$. If the run completes successfully, then $x$ is published on $\G$. $\G$ then verifies that the published secret is correct (i.e., that it corresponds to $X$; \cref{alg:G:correct_secret}). If this is the case, $\G$ updates $\G\s.\s\mathit{verRev}$ to \true{}. The players then get their deposits back (\cref{alg:G:return_deposits}), and $\G$ advances to the final phase.

In case the players fail to publish $x$ in a timely manner, $\G$ enters the abort phase and all players' deposits are confiscated (\cref{alg:G:if_reveal_timeout,alg:G:reveal_abort}). This is another form of collective punishment that $\G$ enforces. Note that in order for this to happen, players holding at least $n-t$ shares must refrain from participating in the \ascii{DKGreveal} run.

We note that in the \ascii{DKGreveal} run, players do not publish their individual shares on $\G$. Rather, they exchange shares among themselves off-chain, reconstruct $x$ off-chain, and eventually publish $x$ on $\G$. The reason to do these steps off-chain is mostly a practical one. See \Cref{sec:implementation} for more details.  

\subsubsection*{The final and abort phases}
In the final phase, the application, or anyone for that matter, can read $x$ from $\G$. If $\G$ enters the final phase, then robustness is satisfied. Conversely, the abort phase implies that the EVR source has failed. In this phase, any call to any of $\G$'s functions fails.

\begin{figure}
\centering
\includegraphics[width=0.7\columnwidth]{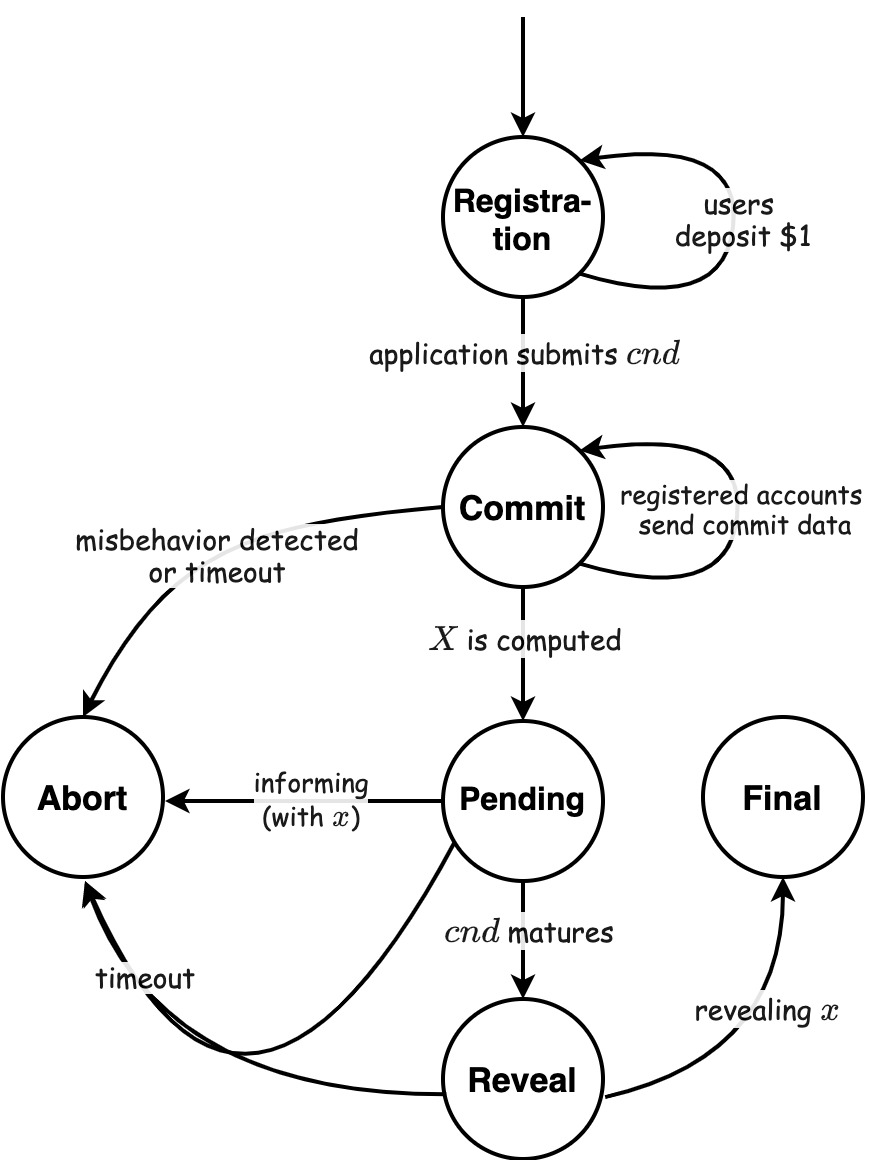}
\caption{A high-level view of $\G$'s phases.}
\label{fig:G-stages}
\end{figure}



\subsection{\texorpdfstring{$\G$}{G}'s Design Rationale}
\label{sec:design-rationale} 
We now give the intuition behind our parameters' choices: $t=\frac{2n}{3}$, $P=n-t$ and $\ell=n$.

First note that if $P > n-t$, then players holding $n-t$ shares might be better off failing $\ascii{DKGreveal}$ because, by \ref{def:evrs-usage:reveal}, they can potentially gain $P$ coins from the application while only losing $n-t$ coins to $\G$. Since we want $P$ to be as large as possible, we set $P=n-t$.

We would also like the informing reward, $\ell$, to be as large as possible, encouraging a player who obtains $t+1$ shares or more before $\mathit{cnd}$ matures to inform. We therefore set $\ell=n$, which is the total amount of coins in possession of $\G$.

By \ref{def:evrs-usage:after_cnd}, informing cannot be bypassed: in order to steal the application's coins, at least one player needs to locally reconstruct $x$. That player then needs to make a decision -- to inform or to steal. As long as there is no player who is better off stealing than informing, the informing mechanism is effective. It discourages players from attempting to break secrecy as they realize that any one of them would inform if they had the possibility to do so.

We note that side contracts cannot force an informant to forfeit or re-distribute her reward from $\G$. To clarify this subtle point, consider the following scenario: A player who wishes to steal the jackpot might attempt to convince fellow players to cooperate with her (i.e., send her their shares) by committing to give up the informing reward were she to inform. If that was possible, informing would lose its sting. This issue is solved by letting informants indicate an arbitrary account to transfer the reward to. Thus, an informant can always use a fresh EOA in $\Z$ that is not related to any side contract.

We are left with setting $t$. While an informant cannot distribute her informing reward, she can distribute her external coins. Additionally, in case of informing (which implies that robustness is violated), the application's coins might end up in players' hands by \ref{def:evrs-usage:reveal} (according to the application's fallback distribution). To make sure that informing results in losses for at least one colluding player, it must be the case that it is not profitable to distribute $e_i + P=e_i+n-t$ coins among players holding $t+1-a_i$ shares. The decentralization assumption (\Cref{eqn:high-participation}) takes care of this: $e_i+a_i \le 2t-n$ for all $i\in [N]$. We set $t$ as to maximize both the economic worth of the randomness $P=n-t$, and the maximal player balance allowed $e_i+a_i = 2t-n$. This leads us to set: $t=\frac{2n}{3}$.

Note that since we assume that a single player can register multiple times in $\G$, assuring secrecy would be hopeless if we did not assume that for all $i\in [N]$, $a_i\le t=\frac{2n}{3}$. The decentralization assumption is a bit stronger: $\forall i\in [N]$, $a_i+e_i\le \nicefrac{n}{3} = \nicefrac{t}{2}$.



\begin{remark} [Collective punishments]
Note that in case informing happens (which implies that secrecy breaks) or if the reveal phase timeouts (which implies that robustness breaks), all players are slashed including ones that behaved honestly. However, in case non-triviality breaks, no collective punishment is enforced.

Collective punishment is inevitable in case secrecy is violated as it is impossible to detect which players colluded in order to break secrecy. Regarding robustness, we chose to take the collective punishment approach due to practical considerations (see \Cref{sec:implementation} for details).
\end{remark}

\section{Game-Theoretic Analysis}
\label{sec:analysis}
We now turn to analyze our EVR source. In \Cref{ssec:game-theory} we give necessary concepts in game theory. In \Cref{sec:game-def} we model our protocol as a game among $\G$'s players. Finally, \Cref{ssec:game-analysis} presents our analysis, showing that our solution satisfies the EVR properties under a strong game-theoretic solution concept.

\subsection{Preliminaries -- Game Theory}
\label{ssec:game-theory}





An $N$-player (normal-form) game $\game$ is a pair $\langle \prod_{i=1}^N S_i,\{u_i\}_{i=1}^N \rangle$, where $S_i$ is player $i$'s \emph{strategy set}; and $u_i:\prod_{i=1}^N S_i \to \mathbb{Z}$ is $i$'s \emph{payoff function}. We denote $S=\prod_{i=1}^N S_i$. We consider non-cooperative games where binding agreements are not possible.

Given a proper subset $J\subsetneq [N]$ and a strategy vector $s=\break(s_1,\dots,s_N) \in S$, denote by $s_J=(s_i)_{i \in J}$ the projection of $s$ on indexes in $J$,  and let $-J = [N] \setminus J$. In a game $\game$, when a subset of the players $-J$ fixes its strategy vector to some $s'_{-J}\in \prod_{i\in-J}S_i$, it induces a new game among the remaining players $J$, which we denote by $\game / s'_{-J}$. Thus, $\game / s'_{-J} \triangleq \langle \prod_{i\in J} S_i, \{\overline{u}_{i}\}_{i\in J} \rangle$, where $\overline{u}_{i}(s_J)=u_i(s_J,s'_{-J})$ for all $i\in J$ and $s_J\in \prod_{i\in J} S_i$.

A common solution concept in game theory is a Nash Equilibrium,
which stipulates that unilateral deviations are not profitable to individual players. This is a fairly weak solution concept, as a coalition might still gain from deviating. The concept of Strong Nash Equilibrium rules out deviations by any conceivable coalition.
However, this concept is often considered to be too strong, as explained in \cite{CPNE}: ``coalitions are allowed complete freedom in choosing their joint deviations: while the whole set of players is concerned with arriving at a strategy vector that is immune to deviations by any coalition, no deviating group of players faces a similar restriction.''

The concept of Coalition-Proof Nash Equilibrium, due to Bernheim et al.\ \cite{CPNE}, solves this inconsistency by considering only self-enforcing deviations, namely, deviations that are stable in the sense that no subset of the deviators has motivation to deviate further. Formally:
\begin{definition}[Coalition-Proof Nash equilibrium (CPNE)]
Let $N>0$ and $\game=\langle S,\{u_i\}_{i=1}^N \rangle$ be an $N$-player game.
\begin{enumerate}
    \item If $N=1$, strategy $s^*\in S$ is a CPNE in $\game$ if $s^*$ maximizes $u_1(\cdot)$.
    
    \item If $N>1$, assume that CPNE has been defined for games with fewer than $N$ players. Then:
    \begin{enumerate}
        \item A strategy $s^* \in S$ is \emph{self-enforcing} in $\game$ if for all $J \subsetneq [N]$, $s_J^*$ is a CPNE in $\game / s_{-J}^*$.
        
        \item A strategy $s^*\in S$ is a CPNE in $\game$ if it is self-enforcing in $\game$ and there does not exist another self-enforcing strategy $s\in S$ such that $u_i(s) > u_i(s^*)$ for all $i \in [N]$.
    \end{enumerate}
\end{enumerate}
\end{definition}

A CPNE is a powerful solution concept attainable in non-\break cooperative games. When considering only self-enforcing strategies, it ensures Pareto efficiency, namely, there is no other self-enforcing strategy vector that increases at least one player's payoff without decreasing anyone else's.

In general, proving that some strategy is a CPNE can be difficult. However, to the purpose of the upcoming analysis the following observation suffices.


\begin{observation}\label{observation:CPNE}
Let $\Gamma = \langle S,\{u_i\} \rangle$ and consider a strategy vector $s^*\in S$. Assume that for every deviating coalition $C\subset [N]$ and for every strategy vector $s^1_C\in \prod_{i\in C} S_i$, there exists a member $j\in C$ such that, either (i) $u_j(s^1_C,s^*_{-C}) \le u_j(s^*)$ (namely, $j$ in not better-off after the deviation), or (ii) there exists $s^2_j\in S_j$ such that $\overline{u}_j(s^2_j,s^1_{-j}) > \overline{u}_j(s^1_C)$ in $\game/s^*_{-C}$ (namely, $j$ is better-off unilaterally re-deviating from $s^1_C$, rendering the original deviation not self-enforcing). Then, $s^*$ is a CPNE in $\Gamma$.  
\end{observation}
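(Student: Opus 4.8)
The plan is to prove \Cref{observation:CPNE} directly from the recursive definition of CPNE, using induction on the number of players and exploiting the fact that the hypothesis is ``hereditary'' in a suitable sense. The key realization is that the hypothesis of the observation, stated for $\game$ and $s^*$, restricts to an analogous hypothesis for every induced subgame $\game/s^*_{-J}$ and the strategy $s^*_J$ — because a deviating coalition inside $\game/s^*_{-J}$ is also a deviating coalition in $\game$ (with the complement still playing $s^*$), so the same member $j$ witnessing (i) or (ii) works, and conditions (i) and (ii) only ever reference payoffs with $s^*_{-C}$ fixed, which is exactly what the induced subgame already does.

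First I would set up the induction on $N$. The base case $N=1$ is immediate: the hypothesis says that for the only possible coalition $C=\{1\}$ and any $s^1_1$, either $u_1(s^1_1) \le u_1(s^*)$ or player $1$ can re-deviate to strictly improve — but a further strict improvement would contradict the first horn if we take $s^1_1$ to be a maximizer, so in fact $s^*$ maximizes $u_1$, hence is a CPNE. For the inductive step, assume the statement holds for all games with fewer than $N$ players. I would first show $s^*$ is \emph{self-enforcing} in $\game$: fix $J\subsetneq[N]$; I must show $s^*_J$ is a CPNE in $\game/s^*_{-J}$. For this I invoke the inductive hypothesis applied to the game $\game/s^*_{-J}$ (which has $|J|<N$ players) and the strategy $s^*_J$; it suffices to verify that the observation's hypothesis holds for this smaller game. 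Take any coalition $C\subseteq J$ and any $s^1_C$. Viewing $C$ as a coalition in $\game$ with the rest playing $s^*_{-C}$, the original hypothesis yields a $j\in C$ satisfying (i) or (ii); and since $\overline{u}$ in $\game/s^*_{-J}$ agrees with $\overline u$ in $\game/s^*_{-C}$ on strategy profiles that extend $s^*_{-J}$ appropriately (both just plug in $s^*$ for everyone outside the active coalition), the same $j$ works for the subgame. Hence $s^*_J$ is a CPNE in $\game/s^*_{-J}$, so $s^*$ is self-enforcing.

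Next I would show there is no other self-enforcing $s$ with $u_i(s) > u_i(s^*)$ for all $i$. Suppose such an $s$ exists. Apply the observation's hypothesis with the grand coalition $C=[N]$ and $s^1_C = s$. Horn (i) is impossible since $u_j(s) > u_j(s^*)$ strictly for every $j$. So horn (ii) holds for some $j$: there is $s^2_j$ with $u_j(s^2_j, s_{-j}) > u_j(s)$, which says $s$ is not a Nash equilibrium (player $j$ has a profitable unilateral deviation). But a self-enforcing strategy is in particular a Nash equilibrium — this follows by unwinding the recursive definition (self-enforcement requires $s_J$ to be a CPNE, hence optimal, in $\game/s_{-J}$ for every singleton and more, which forces the Nash condition). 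Contradiction. Therefore $s^*$ is a CPNE in $\game$, completing the induction.

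The main obstacle I anticipate is the bookkeeping in the self-enforcing step: carefully matching up the induced-subgame payoff functions $\overline{u}$ across the two levels of restriction ($\game/s^*_{-J}$ versus $\game/s^*_{-C}$ for $C\subseteq J$) and confirming that conditions (i) and (ii) transfer verbatim. The substantive point is that both conditions are phrased entirely in terms of deviations \emph{holding $s^*$ fixed outside the coalition}, which is precisely the structure that survives passing to subgames; once this is made explicit the argument is routine. A minor secondary point is spelling out why self-enforcing implies Nash, which is a standard consequence of the CPNE recursion but should be stated for completeness.
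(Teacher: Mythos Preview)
The paper does not actually supply a proof of \Cref{observation:CPNE}; it is stated as an ``observation'' and left to the reader, so there is nothing to compare your attempt against. That said, your argument is correct and is the natural way to establish the claim: the hypothesis is hereditary under restriction to subgames of the form $\game/s^*_{-J}$ (because the nested restrictions satisfy $(\game/s^*_{-J})/s^*_{J\setminus C}=\game/s^*_{-C}$), so induction on $N$ yields self-enforcement; and applying the hypothesis to the grand coalition rules out a strictly dominating self-enforcing profile since self-enforcing implies Nash via the singleton case of the recursion.

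Two minor remarks. First, your base case and your ``self-enforcing implies Nash'' step both tacitly assume that each $u_i(\cdot,s_{-i})$ attains a maximum; this is harmless in the paper's setting but worth stating. Second, the observation is written with $C\subset[N]$, while your second step takes $C=[N]$; the paper's own use of the observation in the proof of \Cref{Theorem:CPNE} considers all $C\subseteq[N]$, so the intended reading is $\subseteq$, and your step is legitimate.
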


\subsection{The Players' Game \texorpdfstring{$\game$}{}}
\label{sec:game-def}
Our game-theoretic analysis begins after a set of $N$ players have chosen to register in $\G$. (As noted in \Cref{sec:service}, incentivizing players to register can be done as in other services~\cite{CosmosWhitePaper,TezosWhitepaper,helix,AugurWhitepaper,TrueBit,LivepeerWhitepaper} and is beyond the scope of this work.) Specifically, player $i \in [N]$ has chosen to deposit $a_i$ coins and keep her remaining $e_i$ coins externally. We assume that the distribution of coins adheres to \Cref{eqn:high-participation} (decentralization).

We define $\game$, the $N$-player game played by $\G$'s players. We describe $\game$'s strategies in three consecutive stages, 
where in later stages players are aware of previous stages' outcome and make decisions in accordance (such a game description is sometimes referred to as extensive-form). 
\begin{enumerate}
    \item In stage 0 players engage in side contracts -- for each coin that they register with (and obtain a private share for), they decide whether to register with an EOA or via a side contract; additionally, they decide whether to use their external coins within side contracts. This stage is played during $\G$'s registration phase.
    \item In stage 1 each player $i$ selects a subset of $[N] \setminus \{i\}$ and sends to each member in this set some or all of her shares. Particularly, $i$ can choose to keep her shares private. This stage is played during $\G$'s pending phase.
    \item In stage 2 a player obtaining $t+1$ shares or more can inform or attempt to illicitly gain application's coins by stealing. Also, after $\mathit{cnd}$ matures, when \ascii{DKGreveal} runs, each player either complies with the protocol's rules or deviates. Players who comply with \ascii{DKGreveal} broadcast (off-chain) all of their shares, and players who deviate broadcast none of them. Once a compliant player collects $t+1$ shares, she reconstructs $x$ and publishes it (on $\G$). 
\end{enumerate}

Note that each stage is played instantaneously, namely, all players decide on their actions at the same time. Specifically, in stage 2, this means that a player $i$ who decides not to reveal her shares cannot decide to reveal them after seeing shares sent by other players (and potentially reconstructing $x$ privately with her own shares that she has yet to publish). This formulation has $i$ decide whether to reveal her shares \emph{before} she knows $x$. Indeed, our game models the players as pessimistic, namely, $i$ makes her decision under the belief that her legitimate $x$-dependent profit is zero. And yet, in practice, the value of $x$ might impact $i$'s legitimate profit from the application (e.g., if she is genuinely lucky to hold a winning ticket), thereby affecting her decision. 
Nevertheless, if $i$ is optimistic, she is ever more motivated to comply with \ascii{DKGreveal} in stage 2. Thus, ruling out such players (i.e., assuming pessimistic ones) does not weaken our analysis. 



The \emph{default} strategy is to follow the protocol: not to engage in side contracts in stage 0; not to send any shares in stage 1; and in stage 2, to inform on obtaining $t+1$ shares or more, and otherwise to broadcast all shares during \ascii{DKGreveal}, and publish $x$ on $\G$ whenever possible. 

$\game$'s payoffs, $\{ u_i \}_{i=1}^N$, are distributed at the end of the game, namely after stage 2. The payoff functions reflect the game's impact on the players' balances; thus,  player $i$'s total balance (in all her accounts) in $\Z$ at the end of the game is $e_i+u_i(s)$.
The payoffs depend on the game's high-level outcome in $s\in S$, as captured by the following three predicates:
\begin{enumerate}
    \item \texttt{INF}$(s)$ indicates whether informing occurs in $s$.
    \item \texttt{SEC}$(s)$ indicates whether secrecy is maintained in $s$, namely, if no player obtains $x$ before $\mathit{cnd}$ matures.
    \item \texttt{ROB}$(s)$ indicates whether robustness holds in $s$, i.e., \texttt{ROB}$(s)$ if $\G$ ends in the final phase (rather than abort).
\end{enumerate}
 
Note that $\texttt{INF}(s)$ implies $\neg \texttt{SEC}(s)$, as informing requires at least one player to obtain $x$ during stage 1. Also, $\texttt{INF}(s)$ implies $\neg \texttt{ROB}(s)$, because $\G$ enters the abort phase after informing. 

\newlist{myList}{enumerate*}{1}

There are three pools of coins that players may gain coins from or lose coins to. We use three auxiliary functions to define how the coins in these pools are distributed and postulate bounds on their values given the predicates above:
\begin{description}

    \item[Application's coins] 
    Recall that we model the players' beliefs\break about legitimate profits from the application as zero (as explained above). Thus, only illicit profits from the application are considered. We use $z:S \rightarrow \mathbb{Z}^N$ to capture such profits.
    
    The following bounds hold:
    \begin{enumerate}[label=A\arabic*.,ref=A\arabic*]
        \item\label{consumerCoins:sum_less_P} For all $i\in [N]$ and for all $s\in S$, $z_i(s) \ge 0$ and $\sum_{j=1}^N z_j(s) < P$, by \ref{def:evrs-usage:reveal} and \ref{def:evrs-usage:before_cnd}.
        
        \item\label{consumerCoins:SEC-ROB} If $\texttt{SEC}(s) \land \texttt{ROB}(s)$, then $\forall i\in [N]: z_i(s) = 0$, by \ref{def:evrs-usage:after_cnd}.
    \end{enumerate}
    
    
    
    \item[External coins] $y:S \rightarrow \mathbb{Z}^N$ determines how the players' external coins are distributed due to side contracts that burn or transfer the coins in certain circumstances. 
    
    We have the following bounds:
    \begin{enumerate}[label=E\arabic*.,ref=E\arabic*]
        \item\label{externalCoins:bounded_by_own_EC} For all $i\in [N]$ and for all $s\in S$, $y_i(s) \ge -e_i$ is restricted by $i$'s external coins.
        
        \item\label{externalCoins:sum-non-positive} Furthermore, for all $s\in S$, $\sum_{i=1}^N y_i(s) \le 0$.
        
        \item\label{externalCoins:no_side_contract} In every strategy vector $s\in S$ in which player $i\in [N]$ does not engage in side contracts in stage 0, $y_i(s) \ge 0$.
    \end{enumerate}
    
    
    
    \item[Deposits in $\G$] $w:S \rightarrow \mathbb{Z}^N$ determines how $\G$'s deposits are distributed among the players. Players can register through side contracts that, in certain circumstances, force them to redistribute their deposits if $\G$ pays them back.
    
    Hence, the following bounds hold:
    \begin{enumerate}[label=D\arabic*.,ref=D\arabic*]
        \item\label{depositCoins:sum} For all $i\in [N]$ and for all $s\in S$, $w_i(s) \ge 0$, and\break  $\sum_{j=1}^N w_j(s) \le n$.
        
        \item\label{depositCoins:informing} If $\texttt{INF}(s)$ then the informant, $f$, gains $\ell$ coins and all others forfeit their deposits. Namely, $w_f(s) = \ell$ and $\forall i\ne f : w_i(s)= 0$. (Recall from \Cref{sec:design-rationale} that an informant cannot bind herself, through side contracts or otherwise, to forfeit or re-distribute the informing reward.)
        
        \item\label{depositCoins:confiscated} If $\neg \texttt{INF}(s) \land \neg \texttt{ROB}(s)$, then $\forall i\in [N] : w_i(s)=0$ (i.e., all deposits are confiscated).
        
        \item\label{depositCoins:no_side_contract}  Also, in a strategy vector $s\in S$ in which player $i$ does not engage in side contracts in stage 0 and $\texttt{ROB}(s)$, $w_i(s) \ge a_i$.
    \end{enumerate}
    
\end{description}

\begin{remark}
We note that the functions $w(\cdot)$ and $y(\cdot)$ do not depend on $x$. This implies that $\game$ does not capture side bets between the players that depend on the value of $x$. Indeed, if such side bets were to be made then the randomness would be worth more than dictated by correct usage.
\end{remark}

Using these functions, we get $u_i(s)=z_i(s)+y_i(s)+w_i(s)$. We note that each player's payoff is at least $-e_i$ and the sum of players' payoffs is less than $n+P$, namely, 
\begin{enumerate}[label=P\arabic*.,ref=P\arabic*]
    \item\label{payoff:sum_less_than} $\forall s \in S$, $\forall i\in [N]:$ $u_i(s) \ge -e_i$ and  $\sum_{j=1}^N u_j(s) < n + P$.
\end{enumerate}



For brevity, we omit the strategy vector $s$ when it is clear from the context.

\subsection{Analysis}
\label{ssec:game-analysis}
Broadly speaking, our goal in this section is to show that in an economic game-theoretic setting, and under our assumptions regarding the players and correct usage, our solution for the EVR source yields strategy choices that satisfy the non-triviality, secrecy, and robustness properties. 
Non-triviality is immediate from \ref{def:evrs-usage:commit} and Escrow-DKG's properties, as explained in \Cref{sec:realization} (note that $\game$ is played only if non-triviality is met). To show secrecy and robustness, we analyze $\game$. We prove that the default strategy vector, which indeed satisfies both properties, is a CPNE in $\game$.

Let $s^*$ be the default strategy vector.
\begin{claim}
For all $i\in [N]$, $u_i(s^*)=a_i$.
\end{claim}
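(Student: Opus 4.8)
The plan is to first pin down the high-level outcome of $\game$ under the default strategy vector $s^*$, and then read off each of the three components $z_i,y_i,w_i$ of $u_i(s^*)$ from the postulated bounds \ref{consumerCoins:sum_less_P}--\ref{depositCoins:no_side_contract}. First I would observe that under $s^*$ no shares are sent in stage 1, so player $i$ holds exactly her own $a_i$ shares during the pending phase; since $a_i \le \nicefrac{n}{3} < \nicefrac{2n}{3}+1 = t+1$ by the decentralization assumption (\Cref{eqn:high-participation}) and the choice $t=\frac{2n}{3}$, no player can reconstruct $x$ before $\mathit{cnd}$ matures. Hence $\texttt{SEC}(s^*)$ holds and (a fortiori) $\neg\texttt{INF}(s^*)$. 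Next, in stage 2 of $s^*$ every player broadcasts all of her shares during \ascii{DKGreveal}, so after $\mathit{cnd}$ matures each compliant player collects all $n \ge t+1$ shares, reconstructs $x$, and publishes it on $\G$; therefore $\G$ reaches the final phase and $\texttt{ROB}(s^*)$ holds.

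Given $\texttt{SEC}(s^*)\land\texttt{ROB}(s^*)$, the second step is to evaluate the three pools. For the application's coins, \ref{consumerCoins:SEC-ROB} immediately yields $z_i(s^*)=0$ for all $i\in[N]$. For the external coins, the default strategy has no player engaging in side contracts in stage 0, so \ref{externalCoins:no_side_contract} gives $y_i(s^*)\ge 0$ for every $i$; combined with the aggregate bound $\sum_{j=1}^N y_j(s^*)\le 0$ from \ref{externalCoins:sum-non-positive}, a squeeze argument forces $y_i(s^*)=0$ for all $i$. For the deposits, again no player uses side contracts in stage 0 and $\texttt{ROB}(s^*)$ holds, so \ref{depositCoins:no_side_contract} gives $w_i(s^*)\ge a_i$ for every $i$; combined with $\sum_{j=1}^N w_j(s^*)\le n = \sum_{j=1}^N a_j$ from \ref{depositCoins:sum}, the same squeeze argument forces $w_i(s^*)=a_i$ for all $i$.

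Finally I would assemble $u_i(s^*) = z_i(s^*)+y_i(s^*)+w_i(s^*) = 0 + 0 + a_i = a_i$, which is the claim. There is no substantive obstacle here; the only two points requiring a moment of care are (i) the threshold arithmetic showing a lone player with $a_i\le\nicefrac{n}{3}$ shares stays below $t+1$, so that $s^*$ genuinely induces $\texttt{SEC}$, $\texttt{ROB}$, and $\neg\texttt{INF}$; and (ii) the two ``squeeze'' steps, where a per-player one-sided inequality is upgraded to an exact equality by pairing it with a matching aggregate bound (using that $\sum_i a_i = n$).
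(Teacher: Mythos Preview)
Your proposal is correct and follows essentially the same approach as the paper: establish $\texttt{SEC}(s^*)$ via the decentralization assumption, establish $\texttt{ROB}(s^*)$ since all players broadcast in stage 2, then invoke exactly the same five properties (\ref{consumerCoins:SEC-ROB}, \ref{externalCoins:sum-non-positive}, \ref{externalCoins:no_side_contract}, \ref{depositCoins:sum}, \ref{depositCoins:no_side_contract}) to read off $z_i=0$, $y_i=0$, $w_i=a_i$. Your write-up is simply more explicit about the two squeeze arguments than the paper's one-line summary, but the content is identical.
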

\begin{proof}
The decentralization assumption implies $\texttt{SEC}(s^*)$ -- no player obtains at least $t+1$ shares in stage 1, therefore informing does not occur and no illicit profit is made by knowing $x$ in advance. Additionally, in $s^*$ all players provide their shares in stage 2, which implies $\texttt{ROB}(s^*)$, and so no illicit profit is made from the application's fallback distribution. Finally, since no side contracts are used in $s^*$, no player loses any of her external coins or deposits. Hence, by \ref{consumerCoins:SEC-ROB}, \ref{externalCoins:sum-non-positive}, \ref{externalCoins:no_side_contract}, \ref{depositCoins:sum} and \ref{depositCoins:no_side_contract} the payoffs are: $u_i(s^*)=a_i$, for all $i\in [N]$.
\end{proof}


For the analysis's sake, we partition the strategy space $S$ into $S_{+}=\{s\in S: \forall i\in [N], u_i(s)\ge 0\}$ and $S_{-}=\{s\in S :\: \exists i\in [N], u_i(s) < 0\}$. Note that $S_+ \cup S_- = S$ and $S_+ \cap S_- = \emptyset$. 
We use the following notation in the proofs: given a set of players $C \subseteq [N]$, $\tickets{C} \triangleq \sum_{i\in C} a_i$.

The following lemma states that $s^*$ is immune to deviations in $S_+$ that preserve secrecy, and is proven in \Cref{app:omitted-proofs}. 
\begin{lemma} \label{Lemma:SNE_robustness_secrecy}
For all $C\subseteq [N]$ and for all $s_C \in \prod_{i\in C} S_i$, such that (i) \texttt{SEC}$(s_{C},s^*_{-C})$ and (ii) $(s_{C},s^*_{-C})\in S_+$,
there exists a player $i\in C$ such that $u_i(s^*) \ge u_i(s_{C},s^*_{-C})$.
\end{lemma}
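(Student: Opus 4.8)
The plan is to fix a nonempty coalition $C$ and a deviation $s_C$ satisfying hypotheses (i) and (ii), write $s=(s_C,s^*_{-C})$, recall from the preceding Claim that $u_i(s^*)=a_i$, and exhibit some $i\in C$ with $u_i(s)\le a_i$. Since $\texttt{SEC}(s)$ holds, $\texttt{INF}(s)$ is false (informing forces $\neg\texttt{SEC}$), so the deposit--informing clause \ref{depositCoins:informing} never applies and $u_i(s)=z_i(s)+y_i(s)+w_i(s)$ throughout. I would then split on whether $\texttt{ROB}(s)$ holds.

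In the case $\texttt{ROB}(s)$: here $\texttt{SEC}(s)\wedge\texttt{ROB}(s)$ forces $z_i(s)=0$ for all $i$ by \ref{consumerCoins:SEC-ROB}, so $u_i(s)=y_i(s)+w_i(s)$. Each $i\notin C$ plays $s^*_i$ and hence uses no side contracts, so \ref{externalCoins:no_side_contract} gives $y_i(s)\ge 0$ and \ref{depositCoins:no_side_contract} (using $\texttt{ROB}(s)$) gives $w_i(s)\ge a_i$; thus $u_i(s)\ge a_i=u_i(s^*)$ for every non-coalition player. Meanwhile $\sum_{j\in[N]}u_j(s)\le 0+n=\sum_{j\in[N]}u_j(s^*)$ by \ref{externalCoins:sum-non-positive} and \ref{depositCoins:sum}. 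Subtracting the non-coalition inequalities leaves $\sum_{i\in C}u_i(s)\le\sum_{i\in C}u_i(s^*)$, so some $i\in C$ has $u_i(s)\le u_i(s^*)$.

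In the case $\neg\texttt{ROB}(s)$: together with $\neg\texttt{INF}(s)$ this triggers \ref{depositCoins:confiscated}, so $w_i(s)=0$ for all $i$ (the collective punishment hits coalition members too), and $u_i(s)=z_i(s)+y_i(s)$. Summing over $C$: all $z_j\ge 0$, so $\sum_{i\in C}z_i(s)\le\sum_{j\in[N]}z_j(s)<P$ by \ref{consumerCoins:sum_less_P}; and each $i\notin C$ uses no side contracts so $y_i(s)\ge 0$, whence with \ref{externalCoins:sum-non-positive}, $\sum_{i\in C}y_i(s)=\sum_{j\in[N]}y_j(s)-\sum_{i\notin C}y_i(s)\le 0$; so $\sum_{i\in C}u_i(s)<P$. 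The key step I would then establish is the matching lower bound $\tickets{C}\ge n-t=P$: since every $i\notin C$ follows the default strategy, in stage 1 the non-coalition players send nothing and in stage 2 they broadcast every share in their possession and publish $x$ as soon as they can gather $t+1$ of them; sending shares out of $C$ can only enlarge this broadcast pool; hence for $x$ to stay unpublished the shares withheld from the pool must number at least $n-t$, and these are held solely by deviators in $C$, whose total share count is exactly $\tickets{C}$. Assuming for contradiction that $u_i(s)>a_i$ for every $i\in C$, integrality of the $u_i$ gives $u_i(s)\ge a_i+1$, so $\sum_{i\in C}u_i(s)\ge\tickets{C}+|C|\ge P+1>P$, contradicting $\sum_{i\in C}u_i(s)<P$. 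Hence some $i\in C$ has $u_i(s)\le a_i=u_i(s^*)$, completing the case.

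I expect the main obstacle to be that last combinatorial step --- converting the semantic failure $\neg\texttt{ROB}(s)$ into the quantitative bound $\tickets{C}\ge n-t$ --- because it requires carefully tracking how shares propagate across stages 1 and 2 under $s^*_{-C}$: that compliant (non-coalition) players pool and rebroadcast every share they observe, that a share can be kept out of that pool only by a member of $C$ who withholds it, and that clause \ref{depositCoins:confiscated} strips deposits from honest and dishonest coalition members alike. Everything else is bookkeeping with the aggregate bounds \ref{consumerCoins:sum_less_P}, \ref{externalCoins:sum-non-positive}, \ref{depositCoins:sum}, the two no-side-contract clauses \ref{externalCoins:no_side_contract} and \ref{depositCoins:no_side_contract}, and the integrality of payoffs. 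Note that hypothesis (ii) is precisely what makes this the delicate direction: if instead $s\in S_-$, the player with negative payoff is necessarily in $C$ (a non-coalition player playing $s^*$ never loses coins, by \ref{consumerCoins:sum_less_P}, \ref{externalCoins:no_side_contract} and \ref{depositCoins:sum}), so the conclusion would be immediate.
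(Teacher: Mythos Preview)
Your proof is correct and follows the same overall case split (\texttt{ROB} vs.\ $\neg$\texttt{ROB}) as the paper. The \texttt{ROB} case is essentially identical to the paper's argument. In the $\neg$\texttt{ROB} case you take a somewhat more direct route: the paper introduces the auxiliary set $Q=\{i:u_i(s)\le u_i(s^*)\}$, proves $\tickets{Q}\ge t+1$ via a contradiction that invokes hypothesis (ii) ($s\in S_+$) to bound $\sum_{\overline Q}(z_i+y_i)$, and then concludes $Q\cap C\neq\emptyset$ from $\tickets{-C}\le t$. You instead sum payoffs over $C$ directly, using \ref{externalCoins:no_side_contract} for the non-coalition players to get $\sum_{i\in C}y_i(s)\le 0$, and compare $\sum_{i\in C}u_i(s)<P$ against $\tickets{C}\ge n-t=P$. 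This is slightly cleaner and, notably, never uses hypothesis (ii) --- so your argument in fact proves a marginally stronger statement than the lemma asserts.

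One small remark: the integrality step is unnecessary. Assuming $u_i(s)>a_i$ for all $i\in C$ already gives $\sum_{i\in C}u_i(s)>\tickets{C}\ge P$, which contradicts $\sum_{i\in C}u_i(s)<P$ directly; you do not need to pass through $u_i(s)\ge a_i+1$. Also, your careful justification of $\tickets{C}\ge n-t$ (tracking which shares reach the broadcast pool) is more explicit than the paper, which simply asserts it.
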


We proceed to analyze the game when secrecy is violated, and illicit profit and informing are feasible. The following lemma, proven in \Cref{app:omitted-proofs}, shows that whenever informing happens, at least one player who deviated from $s^*$ and facilitated informing is not better off.
\begin{lemma} \label{Lemma:informing_deviation}
For all $C\subseteq [N]$ and for all $s_C \in \prod_{i\in C} S_i$, such that (i) \texttt{INF}$(s_{C},s^*_{-C})$ and (ii) $(s_{C},s^*_{-C}) \in S_+$, 
there exists a player $i\in C$ such that $u_i(s^*) \ge u_i(s_{C},s^*_{-C})$.
\end{lemma}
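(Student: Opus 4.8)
The plan is to argue by contradiction. Assume $(s_C,s^*_{-C})$ satisfies \texttt{INF} and lies in $S_+$, and suppose that every coalition member strictly gains, i.e.\ $u_i(s_C,s^*_{-C}) > u_i(s^*) = a_i$ for all $i\in C$; since payoffs are integer-valued, $u_i(s_C,s^*_{-C}) \ge a_i+1$. Let $f$ be the (first, hence unique) informant. By \ref{depositCoins:informing} the informant sweeps the whole pot, $w_f = \ell = n$, and every other player forfeits its deposit, $w_i = 0$ for $i\neq f$. I will split on whether $f\in C$ or $f\notin C$.

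The first step is a lower bound on $\tickets{C}$. Because the players in $-C$ follow the default strategy, they send no shares during stage~1, so every share that changes hands originates with a member of $C$; hence the informant can assemble at most $\tickets{C}$ shares if $f\in C$ and at most $\tickets{C}+a_f$ shares if $f\notin C$. Reconstructing $x$ requires at least $t+1$ shares, so $\tickets{C}\ge t+1$ when $f\in C$, and $\tickets{C}\ge t+1-a_f$ when $f\notin C$. Using $t=\nicefrac{2n}{3}$, $P=n-t=\nicefrac{n}{3}$ and the decentralization bound $a_f\le\nicefrac{n}{3}$, both cases yield $\tickets{C}\ge P+1$; moreover, when $f\in C$ we also get $|C|\ge 2$, since a lone informant would hold only $a_f\le\nicefrac{n}{3}<t+1$ shares.

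The crux of the argument is that the global payoff bound \ref{payoff:sum_less_than} is useless here, because the reward $\ell=n$ is concentrated in the single player $f$; instead I would bound the payoffs of $C'\triangleq C\setminus\{f\}$. Each $i\in C'$ has $w_i=0$, so $u_i = z_i+y_i$. Summing over $C'$: $\sum_{i\in C'} z_i \le \sum_{j\in[N]} z_j < P$ by \ref{consumerCoins:sum_less_P}, and $\sum_{i\in C'} y_i \le e_f$ --- indeed $\sum_{j\in[N]} y_j\le 0$ (\ref{externalCoins:sum-non-positive}) and $\sum_{j\notin C} y_j\ge 0$ (the off-coalition players, including $f$ when $f\notin C$, run $s^*$ and hold no side contracts, \ref{externalCoins:no_side_contract}) give $\sum_{i\in C} y_i\le 0$, and removing $f$ costs at most $-y_f\le e_f$ (\ref{externalCoins:bounded_by_own_EC}). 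Hence $\sum_{i\in C'} u_i < P + e_f$. Against this, the contradiction hypothesis gives $\sum_{i\in C'} u_i \ge \tickets{C'}+|C'|$. When $f\notin C$ this already closes the argument: $\sum_{i\in C} y_i\le 0$ gives the stronger $\sum_{i\in C} u_i < P$, while $\sum_{i\in C} u_i \ge \tickets{C}+|C|\ge (P+1)+1 > P$, a contradiction. When $f\in C$, $\tickets{C'}=\tickets{C}-a_f\ge (t+1)-a_f$ and $|C'|\ge 1$, so $t+2-a_f \le \sum_{i\in C'} u_i < P+e_f = \nicefrac{n}{3}+e_f$, hence $t+2 < \nicefrac{n}{3}+a_f+e_f \le \nicefrac{2n}{3} = t$ by decentralization --- again a contradiction. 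Therefore some $i\in C$ must have $u_i(s_C,s^*_{-C})\le a_i = u_i(s^*)$.

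I expect the only delicate point to be this last piece of bookkeeping: one must resist using $\sum_j u_j < n+P$ and instead set the informant aside, then keep careful track of the fact that the remaining collaborators can be reimbursed only out of the capped illicit profit (less than $P$ in total) and out of external coins, which are zero-sum in aggregate and of which the informant personally commands only $e_f$. The decentralization bound $a_f+e_f\le\nicefrac{n}{3}$ is exactly what makes this reimbursement too small to cover the roughly $t+1$ shares' worth of deposits that the collaborators had to hand over, and it is the one place where the hypothesis on the coin distribution is genuinely used.
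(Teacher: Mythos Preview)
Your proof is correct. The route differs from the paper's in its bookkeeping: the paper defines the global set $Q=\{i:u_i(s)\le a_i\}$, shows $\tickets{Q}\ge n-t$ by bounding $\sum_{i\in\overline{Q}}(z_i+y_i-a_i)$ over $\overline{Q}=[N]\setminus(Q\cup\{f\})$, and then uses a pigeonhole argument with $\tickets{\hat{C}}\ge t+1$ to force $Q\cap C\ne\emptyset$. You instead restrict attention to $C'=C\setminus\{f\}$ from the outset and bound $\sum_{i\in C'}u_i$ directly from above (by $P+e_f$) and below (by $\tickets{C'}+|C'|$). The core inequality is the same in both arguments---the non-informant collaborators can be compensated only out of illicit profit ($<P$) and the informant's external coins ($\le e_f$), and decentralization makes $P+e_f\le t$ too small to cover their $\ge t+1-a_f$ forfeited deposits---but your packaging is more local. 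Two small remarks: the integrality step $u_i\ge a_i+1$ is not actually needed, since the strict inequalities already yield $t+1<t$ and $\tickets{C}>P$ without the extra $|C'|$ term; and, more interestingly, your argument never invokes hypothesis~(ii) (the paper uses $s\in S_+$ to justify $\sum_{i\in\overline{Q}}(z_i+y_i)\le\sum_{i\ne f}(z_i+y_i)$, whereas you get $\sum_{i\in C}y_i\le 0$ from \ref{externalCoins:no_side_contract} applied to the off-coalition players), so your proof in fact establishes a slightly stronger statement.
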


The following lemma, proven in \Cref{app:omitted-proofs} using \Cref{Lemma:informing_deviation}, shows that any deviation from $s^*$ that breaks secrecy is either not self-enforcing or has (at least) one deviating player who is not better off. 
\begin{lemma} \label{Lemma:collusion_CPNE}
For all $C\subseteq [N]$ and all $s_C \in \prod_{i\in C} S_i$, such that (i) $\neg$\texttt{SEC}$(s_{C},s^*_{-C})$ and (ii) $(s_{C},s^*_{-C}) \in S_+$, either $(s_{C},s^*_{-C})$ is not self-enforcing, or
there exists a player $i\in C$ such that $u_i(s^*) \ge u_i(s_{C},s^*_{-C})$.
\end{lemma}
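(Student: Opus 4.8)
The plan is to verify the hypothesis of \Cref{observation:CPNE} for the deviation $s=(s_C,s^*_{-C})$, splitting on whether informing occurs in it. If $\texttt{INF}(s)$, then \Cref{Lemma:informing_deviation} directly supplies a member $i\in C$ with $u_i(s^*)\ge u_i(s)$, and we are done. So assume $\neg\texttt{INF}(s)$. Since $\neg\texttt{SEC}(s)$, some player reconstructs $x$ during stage~1, which in $\game$ requires that player to have collected at least $t+1$ shares. Every player outside $C$ plays $s^*$, hence sends no shares in stage~1 and, moreover, informs in stage~2 upon collecting $t+1$ shares; as informing would contradict $\neg\texttt{INF}(s)$, the player holding $t+1$ shares is some $j\in C$, and all of $j$'s shares originate within $C$, so $\tickets{C}\ge t+1$ and in particular (using $a_j\le\frac n3<t+1$) $C\setminus\{j\}\ne\emptyset$.

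Next I would analyze $j$'s unilateral re-deviation $s^2_j$ in $\game/s^*_{-C}$: keep the stage-0 and stage-1 choices of $s_j$ (so $j$ still holds $t+1$ shares) but, in stage~2, inform via a fresh EOA unrelated to any side contract. Then $\texttt{INF}$ holds, so $w_j=\ell=n$ by~\ref{depositCoins:informing} (the fresh EOA makes $j$ the informant and, as noted in \Cref{sec:design-rationale}, immune to any side-contract clause redistributing $\ell$), while $z_j\ge0$ by~\ref{consumerCoins:sum_less_P} and $y_j\ge -e_j$ by~\ref{externalCoins:bounded_by_own_EC}; hence the re-deviation payoff is at least $n-e_j\ge\frac{2n}{3}$, using decentralization (\Cref{eqn:high-participation}).

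Now split on $u_j(s)$. If $u_j(s)<n-e_j$, then $j$'s re-deviation is strictly profitable, so $s$ is not self-enforcing and we are done. Otherwise $u_j(s)\ge n-e_j$, and I would show that some member of $C\setminus\{j\}$ is no better off than under $s^*$. First rule out $\neg\texttt{ROB}(s)$: combined with $\neg\texttt{INF}(s)$ it forces $w_i=0$ for all $i$ by~\ref{depositCoins:confiscated}, so $u_i=z_i+y_i$; from $u_j\ge n-e_j$ and $z_j<P=\frac n3$ we get $y_j>\frac{2n}{3}-e_j$, and since $\sum_i y_i\le0$ by~\ref{externalCoins:sum-non-positive} while each $u_i\ge0$ forces $z_i\ge -y_i$, summing over $i\ne j$ gives $\sum_{i\ne j}z_i>\frac{2n}{3}-e_j$, contradicting $\sum_i z_i<\frac n3$ once $e_j<\frac n3$ is used. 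Hence $\texttt{ROB}(s)$. Then every $i\notin C$ uses no side contract, so $u_i\ge a_i$ by~\ref{externalCoins:no_side_contract} and~\ref{depositCoins:no_side_contract}, and therefore by~\ref{payoff:sum_less_than}, $\sum_{i\in C}u_i(s)<(n+P)-(n-\tickets{C})=\frac n3+\tickets{C}$, whence $\sum_{i\in C}\bigl(u_i(s)-a_i\bigr)<\frac n3$. On the other hand $u_j(s)-a_j\ge n-(e_j+a_j)\ge\frac{2n}{3}$ by decentralization, so $\sum_{i\in C\setminus\{j\}}\bigl(u_i(s)-a_i\bigr)<\frac n3-\frac{2n}{3}<0$; as $C\setminus\{j\}\ne\emptyset$, some $i\in C\setminus\{j\}$ satisfies $u_i(s^*)=a_i>u_i(s)$, as required.

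The step I expect to be the main obstacle is the case $u_j(s)\ge n-e_j$: it requires juggling the fallback bound $\sum z_i<P$, the conservation laws $\sum y_i\le0$ and $\sum w_i\le n$, and the decentralization bound $e_j+a_j\le\frac n3$ simultaneously, both to exclude $\neg\texttt{ROB}(s)$ and to argue that $j$'s windfall must be drawn from a fellow colluder's pocket. A smaller point needing care is the modeling justification that a fresh-EOA informant cannot be contractually forced to relinquish $\ell$, which is precisely what pins $w_j=\ell$ for the re-deviation regardless of the coalition's stage-0 arrangements.
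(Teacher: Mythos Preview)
Your proof is correct and follows essentially the same strategy as the paper's: split on $\texttt{INF}$, invoke \Cref{Lemma:informing_deviation} when it holds, and otherwise identify a colluder holding $t+1$ shares whose option to inform either renders $s$ not self-enforcing or forces her payoff so high that (after excluding $\neg\texttt{ROB}$ via a contradiction with $S_+$) another colluder must be losing. The only cosmetic differences are the threshold you split on ($n-e_j$, the informing lower bound, versus the paper's $a_f+P$) and that in the $\texttt{ROB}$ sub-case you conclude directly with the ``not better off'' alternative whereas the paper phrases it as a further re-deviation; both are valid.
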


We note that in \Cref{Lemma:collusion_CPNE}, when $(s_C,s^*_{-C})$ is not self-enforcing, it is due to the possibility of one of $C$'s members to unilaterally re-deviate profitably. Finally, we prove our main result.
\begin{theorem} \label{Theorem:CPNE}
$s^*$ is a CPNE in $\game$.
\end{theorem}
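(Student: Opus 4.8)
The plan is to invoke \Cref{observation:CPNE} with the default strategy $s^*$. According to that observation, it suffices to show that for every deviating coalition $C\subseteq[N]$ and every joint deviation $s_C^1\in\prod_{i\in C}S_i$, there is some $j\in C$ that is either (i) not better off under $(s_C^1,s^*_{-C})$ than under $s^*$, or (ii) has a profitable unilateral re-deviation in the induced game $\game/s^*_{-C}$, which exhibits $(s_C^1,s^*_{-C})$ as not self-enforcing. So the whole theorem reduces to a case analysis on the high-level outcome of $(s_C^1,s^*_{-C})$, and the three lemmas already stated handle the cases.

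First I would split on whether $(s_C^1,s^*_{-C})\in S_+$ or $\in S_-$. If the deviation lands in $S_-$, then by definition some player $i$ has $u_i(s_C^1,s^*_{-C})<0$; if that player is in $C$ we are done immediately via alternative (i), since $u_i(s^*)=a_i\ge 0> u_i(s_C^1,s^*_{-C})$ by the Claim. (One should check the deviating player causing the negative payoff can be taken inside $C$: players outside $C$ play the default, do not engage in side contracts, and by \ref{externalCoins:no_side_contract}, \ref{depositCoins:sum}, \ref{consumerCoins:sum_less_P} their payoff is nonnegative, so a negative payoff must be incurred by a coalition member.) If instead $(s_C^1,s^*_{-C})\in S_+$, I would further split on secrecy: if $\texttt{SEC}(s_C^1,s^*_{-C})$ holds, \Cref{Lemma:SNE_robustness_secrecy} directly yields a coalition member $i$ with $u_i(s^*)\ge u_i(s_C^1,s^*_{-C})$, i.e.\ alternative (i); if $\neg\texttt{SEC}(s_C^1,s^*_{-C})$, then \Cref{Lemma:collusion_CPNE} yields that either the deviation is not self-enforcing — and, as remarked after that lemma, this is witnessed by a unilateral profitable re-deviation of a coalition member, i.e.\ exactly alternative (ii) — or again some coalition member is not better off, i.e.\ alternative (i).

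Having covered all cases, \Cref{observation:CPNE} gives that $s^*$ is a CPNE. It then remains only to note the equilibrium's consequences: in $s^*$ we have $\texttt{SEC}(s^*)$ and $\texttt{ROB}(s^*)$ (established in the proof of the Claim), so secrecy and robustness indeed hold along the equilibrium path, completing the connection back to the EVR properties. The main obstacle I anticipate is not in the theorem itself — given the lemmas it is essentially bookkeeping over the case split — but in making sure the $S_-$ case is argued cleanly: one must be careful that the player with negative payoff can always be located inside the deviating coalition $C$, which relies on the fact that a default-playing, side-contract-free player is shielded from losses by properties \ref{externalCoins:no_side_contract} and \ref{depositCoins:no_side_contract} (together with $\texttt{ROB}$ when applicable) or, when robustness fails, that such a player's deposit loss is still compensated within the nonnegativity bounds — this needs a short sub-argument rather than a one-liner.
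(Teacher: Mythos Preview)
Your proposal is correct and follows essentially the same route as the paper's proof: invoke \Cref{observation:CPNE}, split on $S_+$ versus $S_-$, and in the $S_+$ branch split further on $\texttt{SEC}$, dispatching the cases via \Cref{Lemma:SNE_robustness_secrecy} and \Cref{Lemma:collusion_CPNE}. Your anticipated difficulty in the $S_-$ case is overstated, though: the paper dispatches it in one line, since for a non-deviating player $j$ we always have $z_j\ge 0$ (\ref{consumerCoins:sum_less_P}), $y_j\ge 0$ (\ref{externalCoins:no_side_contract}), and $w_j\ge 0$ (\ref{depositCoins:sum}) regardless of whether robustness holds---confiscation sets $w_j=0$, not a negative value---so no separate sub-argument or appeal to \ref{depositCoins:no_side_contract} is needed.
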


\begin{proof}
Consider a coalition $C$ that deviates from $s^*$ resulting in a strategy vector $s=(s_C,s^*_{-C})$. 

If $s\in S_-$, then let $j\in [N]$ be a player for whom $u_j(s) < 0$. Surely, $j\in C$, as if $j$ does not engage in any side contracts (as in $s^*$), she cannot lose any of her external coins and is thus guaranteed a non-negative payoff. Thus, $j$ is a deviating player who loses from the deviation.

It remains to consider $s\in S_+$. Consider two cases:
\begin{enumerate}
    \item $\texttt{SEC}(s)$. From Lemma~\ref{Lemma:SNE_robustness_secrecy}, there exists a player $i\in C$ such that $u_i(s^*) \ge u_i(s)$, that is, $i$ is a member of $C$ that is not better off due to the deviation.
    
    \item $\neg \texttt{SEC}(s)$. From \Cref{Lemma:collusion_CPNE} either $s$ is not self-enforcing or there exists a player $j\in C$ who is not better off due to the deviation.
\end{enumerate}

    
    
We have shown that any possible deviation of a coalition $C\subseteq [N]$ from $s^*$ is either not self-enforcing (due to a unilateral deviation of one of $C$'s members) or includes a member that is not better off. Hence, by Observation~\ref{observation:CPNE}, $s^*$ is a CPNE in $\game$ as required.
\end{proof}

\begin{remark}
$s^*$ is not a Strong Nash Equilibrium in $\game$ because a coalition with $t+1$ shares that deviates, breaks secrecy, and illicitly gains $P$ coins may benefit all of its members. Nevertheless, as shown, informing renders such deviations not self-enforcing.
\end{remark}


\section{Multi-Shot EVR}
\label{sec:multi_shot_EVRS}

Up to this point, for simplicity, we focused on a randomness source that produces a single random value. In principle, such a source may be used also by applications that require a sequence of random values, such as card and die games. To this end, the application needs to invoke $\G.\ascii{comTrigger}$ for each random value it consumes. However, this approach is undesirable for two reasons. First, it is likely that the \ascii{commit} protocol is expensive and so we would like to avoid running it multiple times. In fact, in our EVR source realization, it is the only phase when disputes might need to be handled (as part of the DKG).

Second, recall that the EVR incentive structure explicitly leverages the fact that no illicit profit can be made by failing \ascii{commit} (\ref{def:evrs-usage:commit}). This is easy to establish when \ascii{commit} is executed once in a dApp, before the game begins and before the gamblers send their coins to the application's contract. But a dApp that guarantees a refund to gamblers cannot satisfy \ref{def:evrs-usage:commit} in its second \ascii{commit} run: by this point, some gamblers might discover, for instance, that they are likely to lose their gambling coins, creating an incentive for them to fail the \ascii{commit} run (which would force the application to refund the gamblers). In our solution, failing a \ascii{DKGcommit} run costs at most $1$ coin, so an about-to-lose gambler can register once in the second \ascii{DKGcommit} run and fail it. They would lose $1$ coin for failing \ascii{DKGcommit}, but would possibly save a much larger bet in the application.


To overcome this issue, we define and construct a \emph{multi-shot randomness source}, which robustly reveals a sequence of random values following a single \ascii{commit}. Our construction is based on \emph{verifiable random functions (VRFs)}. Background on VRFs is given in \Cref{VRF_background}, and the  multi-shot EVR source's definition is presented in \Cref{ssec:multi-shot_EVR}. The adaption of \Cref{algo:G_alg} to the multi-shot case is straightforward and its implementation is discussed in \Cref{sec:implementation} below. 

\subsection{Preliminaries -- Verifiable Random Functions}\label{VRF_background}
A VRF~\cite{VRF} is a local (non-distributed) function providing a pseudorandom value along with a cryptographic proof that the value was indeed randomly generated. 
We give a simplified definition of a VRF: 

\begin{definition} [VRF]
A VRF is a triple of efficient algorithms:
\begin{itemize}
    \item \ascii{gen}$()$, returning a pair of keys $(PK,SK)$;
    \item \ascii{eval}$(SK,m)$, returning a bit string $\sigma$;  and
    \item a Boolean function $\ascii{ver-VRF}(PK,m,\sigma)$. 
    
\end{itemize}
Assume \ascii{gen}$()$ returns $(PK,SK)$. Then the following hold:
\begin{description}
    \item[Complete Provability] For all bit strings $m$, if \ascii{eval}$(SK,m)=\sigma$, then $\ascii{ver-VRF}(PK,m,\sigma)=\true{}$.
    
    \item[Unique Provability] For all bit strings $m,\sigma_1,\sigma_2$, such that $\sigma_1 \ne \sigma_2$, either \ascii{ver-VRF}$(PK,m,\sigma_1)=\false{}$ or \ascii{ver-VRF}$(PK,m,\sigma_2)=\false{}$.\footnote{In the original definition, \ascii{ver-VRF} is a probabilistic algorithm and properties (1) and (2) hold with overwhelming probability in the security parameter.}
    \item[Residual Pseudorandomness] There does not exist an efficient algorithm that receives $PK$ as input and after selecting some $m$ (as it pleases) can distinguish (with non-negligible advantage) between $\sigma=\ascii{eval}(SK,m) \in \{0,1\}^h$ and a bit string sampled uniformly at random from $\{0,1\}^h$.
\end{description}
\end{definition}

VRFs can be implemented by certain signature schemes \cite{VRF}. We use a specific BLS signature scheme as a VRF in our prototype implementation (see \Cref{sec:implementation}). 

\subsection{Multi-Shot EVR Source} \label{ssec:multi-shot_EVR}
Let (\ascii{gen}, \ascii{eval}, \ascii{ver-VRF}) be a VRF. A \emph{multi-shot randomness source} consists of a pair of protocols, \ascii{commit} and \ascii{reveal}, and a smart contract $\E$ in $\Z$, as follows: 
\begin{itemize}
    \item A (successful) run of \ascii{commit} computes $\ascii{gen}()$, producing $(PK,SK)$, and publishes $PK$ on $\E$ ($SK$ remains obscure as implied from the secrecy property below).
    
    \item Following a \ascii{commit} run, the $i$th (successful) run of \ascii{reveal} computes $\sigma_i = \ascii{eval}(SK,i)$ and publishes it on $\E$.
    
    \item $\E$ exposes the following API:
    \begin{itemize}
        \item A function $\ascii{comTrigger}(\mathit{cnd}[k])$ called by the application that triggers a \ascii{commit} run, where $\mathit{cnd}[k]$ is a list of $k$ verifiable conditions in $\Z$, $\mathit{cnd}[k]=(\mathit{cnd}_1,\dots,\mathit{cnd}_k)$, determined by the application, that all eventually mature. Once $\mathit{cnd}_i$ matures, the $i$th \ascii{reveal} run begins (given that $i-1$ \ascii{reveal} runs have already completed).
    
        \item $k+1$ variables, $\mathit{verCom}$ and ({$\mathit{verRev}_1$},$\dots$,{$\mathit{verRev}_k$}), that indicate whether the \ascii{commit} and subsequent $k$ \ascii{reveal} runs (resp.) have succeeded. All variables are initiated to $\perp$ and later turn \true{} or \false{}. $\mathit{verRev}_i$ is updated using $\ascii{ver-VRF}(PK,i,\sigma_i)$; $\mathit{verCom}$ is protocol-specific, and (among other things) verifies that the pair $(PK,SK)$ is a valid key pair that \ascii{gen} could output.
    
        \item $k+1$ timeout constants, $t_{\text{com}}$ and $(t_{\text{rev}}^1,\dots,t_{\text{rev}}^k)$ that dictate the maximum time that the \ascii{commit} and subsequent $k$ \ascii{reveal} runs (resp.) can take. In case one of the runs does not complete in a timely manner, $\E$ sets the appropriate variable to \false{}.
    \end{itemize}
\end{itemize}

As in the single-shot case, an \emph{ideal multi-shot randomness source} is a multi-shot randomness source satisfies the following core properties:
\begin{description}
    \item[Non-triviality] If $\E.\ascii{comTrigger}(\mathit{cnd}[k]$) is called, then $\E.\s\mathit{verCom}$ turns \true{}.
    
    \item[Hiding secrecy] If $\E.\s\mathit{verCom}$ turns \true{}, then for all $1 \le i \le k$ as long as $\mathit{cnd}_i$ does not mature, no user obtains \emph{any} information about $\sigma_i$.

    \item[Robustness] If $\E.\s\mathit{verCom}$ turns \true{}, then after $\mathit{cnd}_i$ matures\break $\E.\s\mathit{verRev}_i$ also turns \true{}.
\end{description}

A multi-shot EVR source is a multi-shot randomness source, possibly
satisfying secrecy rather than hiding secrecy. It satisfies the core properties provided that the application satisfies the economic restrictions of \Cref{def:evrs-usage}.

In the ensuing section we spell out the adjustments required in order to turn our single-shot EVR source (\Cref{sec:EVR_realization}) into a multi-shot EVR source.


\section{Practical Considerations}
\label{sec:implementation}
We now present our implementation of a multi-shot variant of $\G$ over Ethereum, denoted $\hat{\G}$. In \Cref{sec:ETHvsZ} we point to a few aspects where Ethereum diverges from our idealized blockchain abstraction $\Z$. In \Cref{sec:impl_details} we provide implementation details, and in \Cref{sec:real-world}, we address issues that arise due to the aspects discussed in \Cref{sec:ETHvsZ}.

\subsection{``Mind The Gap'': Ethereum vs \texorpdfstring{$\Z$}{B}}
\label{sec:ETHvsZ}
As described in \Cref{sec:model}, $\Z$'s Log can instantaneously append and process an unbounded number of transactions that consume unbounded computational resources. If transactions are issued concurrently, they are appended to the Log according to some deterministic order. Conversely, in Ethereum, transactions are appended to the blockchain in blocks, which are mined at an average pace of $1$ block per $12-15$ seconds~\cite{ethereum_block_time}, and are bounded in the computational resources they may consume (and thus also in the number of transactions they can include). Specifically, blocks are allowed to consume at most $\approx 10^7$ \emph{gas}~\cite{etherscan_gas_limit} (as of June 2020), a unit of measurement quantifying the computational resources in Ethereum. These restrictions imply that transaction issuers compete for limited resources. The competition takes place in the form of a gas auction: transaction issuers bid how much fee they are willing to pay per unit of gas that their transaction consumes.
Among the pending transactions, miners have complete freedom to pick transactions to their blocks. Since a transaction's fee is paid to the miner that includes it in a block, miners tend to pick the highest fee-paying transactions.   

Ethereum's open fee market introduces the possibility to front-run: front-running occurs when a later issued transaction out-bids an earlier one and ends up being processed first. 

Whereas $\Z$'s Log is immutable, in Ethereum, miners choose voluntarily on top of which blockchain tip to mine, creating a possibility for chain \emph{reorgs}, where blocks that appear to have been appended are supplanted by other blocks (if hashrate is sufficiently decentralized and block propagation is fast enough, miners are incentivized to mine on top of the longest branch, which is considered to be the canonical one, and this risk is reduced).






\subsection{Implementation Details}
\label{sec:impl_details}
To realize Escrow-DKG in Ethereum, we use the open-source Eth-DKG library of Asayag et al.\ \cite{Eth-DKG_Github} (we chose to use this library rather than a similar one due to Schindler et al.\ \cite{EthDKG_Github_austrians} because the former incorporates deposits and slashing as needed for our game-theoretic setting). Loosely speaking, Eth-DKG implements a VRF via a threshold BLS signature scheme. During the commit phase, it generates a pair of keys -- $x$ the secret key for signing and $X=g^x$ the public key for signature verification. While $X$ is published on-chain, $x$, is shared by the participants via Shamir secret sharing. The library can be used to verify that key generation was successful and update $\mathit{verCom}$ as described in \Cref{sec:realization}. During the pending and reveal phases, signatures (rather than $x$ itself as in the single-shot version) are published on-chain. Every signature is evaluated via an on-chain signature verification function that uses $X$ (and updates $\mathit{verRev}_i$). The library utilizes a number of pre-compiled contracts (that were originally introduced to Ethereum to allow for efficient on-chain verification of zk-SNARKS \cite{ethereum,zkSNARKs})
for signature verification. A single signature verification consumes 113,000 gas \cite{pairing_gas_cost}.

The reveal phase is scalable: signature reconstruction happens off-chain and only verification is done on-chain, at a price of verifying a single (threshold) BLS signature, regardless of the number of shares that were required to compute the signature or the number of participants in the DKG. The off-chain reconstruction requires communication among the players. On a public WAN, it takes less than a second (in normal conditions, with hundreds of participants). The delay in the reveal phase thus mostly depends on Ethereum's block rate. To be on the safe side, we set the timeout in the reveal phase to be $10$ minutes (this parameter is configurable). Note that the timeout only affects latency in case of failure, whereas successful reveals can be fast.

The on-chain component of Eth-DKG's key generation (i.e., the commit phase) is less scalable. 
It essentially verifies $n$ independent Feldman verifiable secret sharing~\cite{FeldmanVSS} runs. Implementing this on-chain naïvely would consume a lot of resources and would be infeasible even for small $n$s. Eth-DKG is able to steer most of this computational burden off-chain by taking an optimistic approach: It proceeds assuming that all players follow the protocol and allows accounts to file disputes in case they detect a problem. In the latter case, dispute arbitration happens by an interactive protocol between the disputing players. The arbitration protocol has both on-chain and off-chain components. This way, Eth-DKG can accommodate a few hundreds of participants without exceeding the Ethereum block gas limit~\cite{rational_threshold_cryptosystems,EthDKG_austrians}. We set $n_{\text{max}}=256$ in $\hat{\G}$.

\subsection{Addressing Real-World Issues}
\label{sec:real-world}
\subsubsection{Registration}
The registration phase needs to make sure that no more than $n_{\text{max}}$ accounts register. But restricting $n$ risks the decentralization of the EVR source as a single entity that registers multiple times might prevent others from registering.
Our heuristic to address this issue is to leave the registration phase open for a predefined time (we set it to 24 hours, but this is configurable), and let users bid how much deposit they are willing to invest. By the end of the registration period, the top $n_{\text{max}}$ bidders are chosen to participate. 
The risk with this method is that at the very end of the registration phase an attacker can submit multiple bids, slightly outbidding the current ones, without time for others to respond. To mitigate this problem, we limit the number of accepted bids per block to 1 and only accept bids that outbid the current lowest bid by at least $3\%$ (again, these parameters are configurable).

While in $\G$ the registration ends when the application calls \ascii{comTrigger}, which it can do whenever it pleases, in $\hat{\G}$ we set a few constraints on the transition from the registration to the commit phases. As mentioned, we set a minimum time for the registration phase -- 24 hours. Also, we set $n_\text{min}=100$ such that as long as $n_\text{min}$ accounts do not register, registration cannot end. Finally, if \ascii{comTrigger} is not called within a predetermined time frame, the registered accounts can withdraw their deposits. Another slight variation between $\G$ and $\hat{\G}$ is that in our implementation, we allow multiple registrations from the same account in a single transaction in order to reduce gas costs.


\subsubsection{Informing}
The informing mechanism we propose in \Cref{sec:EVR_realization} is susceptible to front-running -- an attacker listening to the network may detect an informing transaction, learn the secret, submit a competing informing transaction with a higher gas bid, and front-run the original transaction to collect the informing reward. This risk nullifies the informant's original incentive to inform and breaks our EVR source's incentive layer. 

To address this problem we employ a two phase commit-reveal informing mechanism. In the first phase, the informant submits a commit-informing transaction that includes a hash of the secret $x$ and the account that is meant to receive the informing reward -- $\ascii{hash}(x||\mathit{acc})$. In the second phase, the informant reveals both $x$ and $\mathit{acc}$ in a reveal-informing transaction. 

From the commit-informing transaction, an attacker that owns an EOA $\mathit{acc}_A$ cannot infer a corresponding hash, $\ascii{hash}(x||\mathit{acc}_A)$, and cannot front-run the original informant. The only way for an attacker to compute $\ascii{hash}(x||\mathit{acc}_A)$ is by obtaining $x$. If the attacker learns $x$ only from the reveal transaction, then she needs to cause a reorg back to the block before the original informant's commit-informing transaction was included. By requiring a long enough delay between the commit-informing and the reveal-informing transactions, such a reorg is not a viable risk (we configure this delay to be 30 blocks).

To disincentivize false commit-informing transactions (where the presumed informant does not know $x$ and submits a garbage hash), $\hat{\G}$ processes such transactions only if they come with a deposit. The deposit is paid back when a corresponding reveal-informing transaction is processed. 

\subsubsection{Reconstructing $x$: on-chain vs off-chain}
In our EVR source, in the reveal phase, players engage in a run of \ascii{DKGreveal} that reconstructs $x$ (or a signature in the multi-shot variant) off-chain and then publishes it on-chain for $\G$ to verify. We could have taken a different approach, where players publish their individual shares on $\G$, which then calculates $x$ on-chain (once $t+1$ shares are published). This approach could eliminate the collective punishment enforced when robustness breaks, as $\G$ can now confiscate only the deposits tied to shares that were not revealed on time. However, this approach does not scale in the number of shares, and we therefore opt for off-chain reconstruction both in our theoretical solution and in our implementation.

\subsubsection{Transaction fees} 
Transaction fees in Ethereum are not predictable and vary dramatically between periods of high and low demand \cite{gas_price_etherscan}. 

When transaction fees are significant, the off-chain reconstruction introduces asymmetry among the players: a single player submits a reveal transaction and has to pay the fee, whereas other players do not pay anything. The risk is that in order to avoid this extra payment, players would leave it to other players to submit the reveal transaction, ending up in a situation where no one actually submits it.  

To mitigate this risk, we add to $\hat{\G}$ a built-in fee refund mechanism (only for reveal transactions). The refund is paid to the player who submits the transaction, with $\hat{\G}$'s coins. In this way, the fee is divided equally among all players: instead of returning the full 1 coin deposit (per registered account), $\hat{\G}$ returns $1-\frac{\emph{fee}}{n}$.

Nevertheless, a naïve refund mechanism that simply reimburses the full amount is susceptible to an attack where the player submitting the transaction bids for an unnecessarily high gas price. This attack benefits the miner who mines the transaction, and this miner might collude with the bidder.
To mitigate this risk, $\hat{\G}$'s reveal phase stays open for some short time (5 blocks) after the first reveal transaction is processed. During this time, players may submit additional reveal transactions paying lower fees. Only the player who pays the lowest fee is refunded. Thus, if players detect a reveal transaction that substantially exceeds the current fee level, they submit a cheaper reveal transaction, to make sure that they are not paying too much to miners.


Orthogonal to this technique, to reduce fees that players need to pay, we use a GasToken-based approach. GasToken~\cite{GasToken} allows users to (conceptually) purchase gas, store it, and later on consume it to cover transaction fees when gas prices are high. Our contract preemptively purchases gas tokens if it is cheap and automatically (without user intervention) uses them when gas prices are high (gas prices are considered high when they pass some threshold, measured according to historical data) for reveal transactions. 


\section{Related Work}
\label{sec:related}
While distributed randomness generation has been widely studied for decades, our work is unique in considering the randomness generation problem in an economic context and providing a solution that motivates cooperation under a strong game-theoretic concept. We now discuss related approaches to randomness generation.

\paragraph{Traditional distributed coin flipping}
Traditionally, coin flipping protocols~\cite{CoinTossingBlum,Cleve86,MoranNaorSegev09,BeimelOmriOrlov10} are designed for an adversarial model where some threshold of the participating processes are Byzantine and the remaining ones follow the protocol.  
From such protocols, we adopt the approach of committing to randomly sampled secrets before revealing them; in particular, our multi-shot EVR source uses VRFs~\cite{VRF}, as previously done in~\cite{Chainlink_VRF,SBRDR,Not-COINcidence,AlgorandV9,Dfinity,polkadot,keep_random_beacon}. Note that protocols designed for adversarial models are not applicable as-is to economic settings due to economic pressures that might lead more than the predefined threshold of the participants to diverge from the protocol.
Our solution mitigates this problem using secret-sharing and incentives, as well as bounding the economic worth of the secret. We prove that within our economic context, such attacks are not profitable and thus are not exercised by our game-theoretic players.

\paragraph{Blockchain-based randomness}
Similarly to our EVR source, a number of recent works have used public blockchains for distributed randomness generation, where the blockchain provides the source of truth regarding the generated random values. 

Some of these works consider a fully-adversarial setting, where \emph{any number} of players may be malicious~\cite{bitcoin_fair_lottery, MPC_bitcoin, zero_collateral_lotteries}. Although these solutions are more robust than ours (as they consider a broader range of user behaviors), this comes at a cost: In Bitcoin-based lotteries~\cite{bitcoin_fair_lottery, MPC_bitcoin}, deposits are very high -- $O(n^2)$ where the jackpot and the number of participants are both $O(n)$. In contrast, our deposits are $1$ coin for a jackpot of $O(n)$ with $n$ participants. And while in Zero-Collateral Lotteries~\cite{zero_collateral_lotteries} there are no deposits, they have limited scalability, as they require players to actively interact on-chain in $O(\log n)$ rounds. The economic setting in which our solution is realized enable us to reduce both the on-chain load and the deposits, while maintaining the trustworthiness of the randomness produced.
Moreover, in the aforementioned works, the randomness generation is tied to a particular lottery application, whereas our solution provides randomness as-a-service, to arbitrary applications. 

Similarly to our EVR source, RANDAO~\cite{RandaoPaperV085, RandaoGithub} is a smart contract on Ethereum that 
is used as an escrow to incentivize correct execution of a commit-reveal scheme. Yet RANDAO has not modeled user behavior or considered the economic implications of the randomness it produces. If it were to be used for a lottery in a model similar to ours, it would require deposits of $O(n^2)$ similarly to~\cite{bitcoin_fair_lottery, MPC_bitcoin}, because its incentive structure is similar to theirs. 

The blockchain itself (e.g., Nakamoto consensus~\cite{Bitcoin} or Algorand~\cite{AlgorandV9, AlgorandImpl}) typically generates and uses randomness for the specific purpose of selecting a leader to append new blocks to the chain. Like our protocol, blockchains require decentralization in order to work correctly~\cite{SelfishMining, SelfishMining-AvivZohar, Stubborn_Mining}. Yet unlike blockchains, our protocol preserves decentralization if it exists in the initial state -- the informing mechanism acts as an effective counter-measure to centralization pressure. 

Some recent works have exploited Bitcoin's proofs-of-work to extract publicly-verifiable random bits~\cite{bitcoin_beacon_LIFs,bitcoin_public_randomness} that are safe to use by general lotteries. 
Like our work, (and in contrast to most other works in the area), 
they consider selfish agents that attempt to distort the randomness generation process for a profit. 
Specifically, they show under which conditions it is profitable for a lottery player to attack. However, a user partaking in the lottery has no way of knowing whether other players can attack profitably, and so cannot tell whether the random generation process is indeed fair. In contrast, our solution ensures the fairness of the process whenever the application follows the correct usage guidelines. The EVR source explicitly and publicly specifies the bound on the randomness' worth, and any user can verify (using the public blockchain) that the application respects these bounds.

\paragraph{Related cryptographic primitives}
A game-theoretic model was also considered in the context of \emph{Rational Secret Sharing (RSS)}~\cite{RationalSecretSharingHalpern,RationalSecretSharingGordon,RationalSecretSharingLindell,RationalSecretSharingNonSimultanous1,RationalSecretSharingNonSimultanous2}, which focuses on reconstructing a secret (not necessarily a random value) among a network of selfish agents who prefer to learn the secret alone rather than together. In contrast, in our setting, the main profit is made when $x$ is publicly published, and so learning it alone (after $\mathit{cnd}$ matures) has little benefit. 


Delayed computation techniques~\cite{random_zoo_sloth_unicorn} utilize inherently sequential computations in order to generate random bits. A promising research direction in this vein is \emph{verifiable delay functions (VDFs)}~\cite{Boneh_VDF,Pietrzak_VDF,Wesolowski_VDF}, which are exponentially faster to verify than to evaluate, and hence, their input can be used as a commitment to a random value that will take time to be discovered. However, as of today, VDFs are not readily usable without some trusted setup. Furthermore, delays vary dramatically across hardware technologies, implying that the random value is revealed much earlier to some users than to others, making it difficult to incentivize slower users to partake in the evaluation process.
Whereas VDFs are not practical today, simple delay functions (without fast verifiability) have been implemented using smart contracts~\cite{proof_of_delay_ethereum}. But their verification occurs via an elaborate dispute process that consumes significant on-chain resources, and can take place for every new random value. In contrast, in our multi-shot EVR source, such a dispute process might take place at most once -- during the commit phase. Subsequently, verification of new random values occurs exclusively on-chain. Moreover, in the delayed computation approach there is an inherent delay before every new random value is publicized, whereas our EVR source can produce random bits on-demand within a latency proportional to the blockchain's block generation rate.



\section{Conclusion}
\label{sec:conclusion}
Blockchain-based dApps are proliferating nowadays, offering a wide range of decentralized trusted services. A lucrative application domain in this context 
is online gambling. Nevertheless, dApps implemented as smart contracts are inherently deterministic and thus cannot natively support such dApps. Rather, they need a trusted external source of randomness. In this work, we have addressed this need for a randomness source that can sustain its trustworthiness even when the provided randomness has significant economic consequences. To this end, we introduced the notion of EVR and showed how to build an EVR source.

Our EVR source produces random bits via an open distributed protocol where players are rational and may collude using side-contracts on the blockchain. 
Our protocol incorporates a  novel \emph{informing} mechanism, 
which acts as an effective deterrent against collusion, guaranteeing that the secret bits indeed remain secret. 

Our game-theoretic analysis has shown that as long as none of the players is ``too rich'', following the default strategy gives rise to a Coalition-Proof Nash Equilibrium -- a powerful solution concept in game theory -- where secrecy and robustness hold.

We implemented a proof-of-concept of our EVR source as a smart contract over the Ethereum public blockchain, optimizing off-chain communication to achieve scalability to hundreds of players. 
We hope that future work will suggest -- and build -- additional realizations of the EVR source formalized herein. In particular, it would be interesting to push scalability even further, perhaps by improving the off-chain execution path.

\bibliographystyle{ACM-Reference-Format}
\bibliography{bibliography}

\appendix
\section{Formal Proofs}
\label{app:omitted-proofs}
\begin{lemmaclone} {Lemma:SNE_robustness_secrecy}
For all $C\subseteq [N]$ and for all $s_C \in \prod_{i\in C} S_i$, such that (i) SEC$(s_{C},s^*_{-C})$ and (ii) $(s_{C},s^*_{-C})\in S_+$,
there exists a player $i\in C$ such that $u_i(s^*) \ge u_i(s_{C},s^*_{-C})$.
\end{lemmaclone}

\begin{proof}
Consider a coalition $C$ that deviates from $s^*$ resulting in a strategy vector $s \triangleq (s_{C},s^*_{-C})$ as assumed. 

Consider the case of \texttt{ROB}$(s)$, then by \ref{consumerCoins:SEC-ROB} $\forall i\in[N]: z_i(s)=0$, so $u_i(s) = y_i(s) + w_i(s)$. 
Note that $\sum_{i=1}^N (y_i+w_i-a_i) \le 0$, because $\sum_{i=1}^N y_i \le 0$ from \ref{externalCoins:sum-non-positive}, $\sum_{i=1}^N w_i \le n$ from \ref{depositCoins:sum} and $\sum_{i=1}^N a_i = n$ from definition.
If for all $i\in[N] : y_i+w_i-a_i=0$, then $u_i(s)=a_i$ thereby, none of the payoffs change relative to $s^*$.
Otherwise, there is some $i\in[N]$ such that $y_i+w_i-a_i\ne 0$. Hence, there is at least one player $j$ where $y_j+w_j-a_j < 0$, therefore, $u_j(s)<u_j(s^*)=a_j$. If $j\notin C$, then $j$ does not engage in side contracts. This implies that $u_j(s)\ge a_j$ (by \ref{externalCoins:no_side_contract} and \ref{depositCoins:no_side_contract}), which is a contradiction. Thus, $j\in C$ as required for this case.



We move to consider the case of $\neg \texttt{ROB}(s)$, which implies that $\tickets{C} \ge n-t$ and, by \ref{depositCoins:confiscated}, $\forall i\in [N] :\: u_i(s) = z_i(s)+y_i(s)$.
Consider the set of players $Q$ who are not better off after $C$'s deviation, $Q \triangleq \{ i\in[N] :\: u_i(s) \le u_i(s^*) \}$.

We show that $\tickets{Q}\geq t+1$.
Assume by way of contradiction that $\tickets{Q} < t+1$. Denote $\overline{Q}=[N] \setminus Q$ and note the following: 
\begin{enumerate}
    \item For all $i\in \overline{Q}$, $z_i+y_i-a_i > 0$. This stems directly from the definition of $\overline{Q}$ and the payoffs in $s$ and $s^*$. 
    \item $\sum_{i\in \overline{Q}} (z_i+y_i) \leq \sum_{i\in [N]} (z_i+y_i) < P$. Here, the first inequality is due to $s\in S_+$, and the second is by \ref{consumerCoins:sum_less_P} and \ref{externalCoins:sum-non-positive}. 
    \item $\tickets{\overline{Q}} \geq n-t = P$, by the contradiction assumption. 
\end{enumerate}
Hence,
\[ 0 < \sum_{i\in \overline{Q}} (z_i + y_i - a_i) = \sum_{i\in \overline{Q}} (z_i + y_i) - \tickets{\overline{Q}} < P-P = 0, \]
a contradiction. We conclude that $Q \cap C \neq \emptyset$ which implies that at least one member in $C$ is not better off after the deviation.
\end{proof}

Note that in this proof we have not used the restriction that players have a limited budget of external coins. \newline

\begin{lemmaclone} {Lemma:informing_deviation}
For all $C\subseteq [N]$ and for all $s_C \in \prod_{i\in C} S_i$, such that (i) \texttt{INF}$(s_{C},s^*_{-C})$ and (ii) $(s_{C},s^*_{-C}) \in S_+$, 
there exists a player $i\in C$ such that $u_i(s^*) \ge u_i(s_{C},s^*_{-C})$.
\end{lemmaclone}

\begin{proof}
Consider a coalition $C$ that deviates from $s^*$ resulting in a strategy vector $s \triangleq (s_{C},s^*_{-C})$ as assumed.
Denote the informant by $f \in [N]$.
Denote $\hat{C} \triangleq C \cup \{ f\}$ and observe that $\tickets{\hat{C}} \ge t+1$. ($f$ does not necessarily belong to $C$ as players who follow the default strategy inform in stage 2 if they can.) \texttt{INF}$(s)$ implies that $\forall i\ne f : u_i(s)=z_i(s)+y_i(s)$ by \ref{depositCoins:informing}.

Consider the set of players who are not better off after $C$'s deviation, $Q \triangleq \{ i\in[N] :\: u_i(s) \le u_i(s^*) \}$. Note that $f \notin Q$ as she makes a net profit from the deviation: $u_f(s) = \ell + z_f(s) + y_f(s)\ge \ell - e_f$ and $u_f(s^*)=a_f$. From \Cref{eqn:high-participation} we have $e_f+a_f \le \nicefrac{n}{3} < n =\ell$.

We show that $\tickets{Q}\geq n-t$.
Assume by way of contradiction that $\tickets{Q} < n-t$. Denote $\overline{Q}=[N] \setminus (Q \cup \{f\})$ and note the following: 
\begin{enumerate}
    \item For all $i\in \overline{Q}$, $z_i + y_i - a_i > 0$.  This stems directly from the definition of $\overline{Q}$.
    \item $\sum_{i\in \overline{Q}} (z_i + y_i) \leq \sum_{i\in [N] \setminus \{f\} } (z_i + y_i) < e_f + P$. Here, the first inequality is due to the fact that $s\in S_+$, and the second inequality is due to \ref{consumerCoins:sum_less_P}, \ref{externalCoins:sum-non-positive} and the fact that $y_f \ge -e_f$ (\ref{externalCoins:bounded_by_own_EC}).
    \item $\tickets{\overline{Q}} = (n-(a_f+\tickets{Q}) > n-a_f-(n-t) = t-a_f$. The inequality here is due to the contradiction assumption.
\end{enumerate}
Hence,
\begin{equation*}
\begin{split}
    0 &< \sum_{i\in \overline{Q}} (z_i + y_i - a_i) = \sum_{i\in \overline{Q}} (z_i + y_i) - \tickets{\overline{Q}} \\ &< (e_f + P) - (t - a_f) \le (2t-n)+(n-t)-t = 0,
\end{split}
\end{equation*}
a contradiction. In the last inequality we used \Cref{eqn:high-participation}, which implies that $e_f+a_f \le \nicefrac{n}{3}= 2t-n$. We conclude that $Q \cap \hat{C} \neq \emptyset$, and since $f \notin Q$, we conclude that at least one member in $C$ is not better off after the deviation.
\end{proof}

\begin{lemmaclone} {Lemma:collusion_CPNE}
For all $C\subseteq [N]$ and all $s_C \in \prod_{i\in C} S_i$, such that (i) $\neg$\texttt{SEC}$(s_{C},s^*_{-C})$ and (ii) $(s_{C},s^*_{-C}) \in S_+$, either $(s_{C},s^*_{-C})$ is not self-enforcing, or
there exists a player $i\in C$ such that $u_i(s^*) \ge u_i(s_{C},s^*_{-C})$.
\end{lemmaclone}

\begin{proof}
Consider a coalition $C$ that deviates from $s^*$ resulting in a strategy vector $s \triangleq (s_{C},s^*_{-C})$ as assumed.

If \texttt{INF}$(s)$, then from \Cref{Lemma:informing_deviation} there is (at least) one member of $C$ who is not better off after the deviation, and we are done. 

Otherwise, $\neg \texttt{INF}(s)$. Denote by $f$ the player that owns an account with the lowest id among those who obtain $\ge t+1$ shares after stage 1 (since secrecy breaks in $s$ there is at least one such player). Recall that in our model, $f$'s informing transaction (if issued) would precede all other informing transactions (if any). $f\in C$ as she does not inform, although the default strategy instructs her to do so. We next argue that there is a  member of $C$ that is better off deviating again, rendering $s$ not self-enforcing.
Consider three cases.
\begin{enumerate}
    \item Assume $u_f(s) \le a_f + P$. By \Cref{eqn:high-participation}, $e_f+a_f \le \nicefrac{n}{3} < \frac{2n}{3} = \ell - P $ which implies $\ell - e_f > a_f + P$. $\ell-e_f$ is $f$'s minimal payoff if she unilaterally deviates from $s$ and informs, and it is higher than $f$'s payoff under $s$. So, in this case, $s$ is not self-enforcing. 
    
    \item Assume $u_f(s) > a_f + P$ and \texttt{ROB}$(s)$. Since $\sum_{i=1}^N u_i(s) < P + \sum_{i=1}^N a_i$ (by \ref{payoff:sum_less_than}), the assumption implies that some player $j\in [N]$ ends up with $u_j(s) < a_j$.
    Note that if $j$ now deviates from $s$ and plays the default strategy (and therefore does not engage in side contracts), robustness continues to hold and she gains at least $a_j$ (by \ref{externalCoins:no_side_contract} and \ref{depositCoins:no_side_contract}). So in this case, too, $s$ is not self-enforcing.

    
    \item Assume $u_f(s) > a_f + P$ and $\neg \texttt{ROB}(s)$. Now, for all $i \in [N]: u_i(s) = z_i(s) + y_i(s)$ by \ref{depositCoins:confiscated}. So, by \ref{externalCoins:sum-non-positive} and \ref{consumerCoins:sum_less_P} $\sum_{i=1}^N u_i(s) < P$, which implies that some player $j\in [N]$ ends up with $u_j(s) < 0$. So, $s \notin S_+$ in contradiction to our assumption.\qedhere
\end{enumerate}
\end{proof}

\end{document}